\newtheorem{assumption}{Assumption}[]
\newtheorem{theorem}{Theorem}[]
\newtheorem{corollary}{Corollary}[]
\newtheorem{lemma}[]{Lemma}
\newtheorem{definition}{Definition}
\theoremstyle{definition}
\renewcommand{\exp}[1]{e^{ #1 }}
\newcommand{\comment}[1]{{}}
\newcommand{\thu}{Department of Mathematics, Tsinghua University,  Beijing 100084, China}
\newcommand{\YMSC}{Yau Mathematical Sciences Center, Tsinghua University,  Beijing 100084, China}
\newcommand{\bimsa}{Yanqi Lake Beijing Institute of Mathematical Sciences and Applications, Beijing 100407, China }
\newcommand{\shanghai}{Shanghai Qi Zhi Institute, Shanghai 200232, China}
\newcommand{\IIIS}{Center for Quantum Information, IIIS, Tsinghua University, Beijing 100084, China}
\newcommand{\Hefei}{Hefei National Laboratory, Hefei 230088, China}
\RenewDocumentCommand{\tr}{m}{\operatorname{Tr} \left( #1 \right)}
\RenewDocumentCommand{\Tr}{m}{\operatorname{Tr} \left( #1 \right)}
\begin{document}

\title{A Unified Frequency Principle for Quantum and Classical Machine Learning}

\author{Rundi Lu}
\thanks{These authors contributed equally to this work.}
\affiliation{\YMSC}
\affiliation{\thu}

\author{Ruiqi Zhang}
\thanks{These authors contributed equally to this work.}
\affiliation{\YMSC}
\affiliation{\thu}
 
\author{Weikang Li}
\affiliation{\IIIS}
% \affiliation{\Hefei}

\author{Zhaohui Wei}
% \thanks{weizhaohui@gmail.com}
\affiliation{\YMSC}
% \affiliation{\thu}
\affiliation{\bimsa}

\author{Dong-Ling Deng}
\thanks{dldeng@tsinghua.edu.cn}
\affiliation{\IIIS}
\affiliation{\shanghai}  
\affiliation{\Hefei}

\author{Zhengwei Liu}
\thanks{liuzhengwei@mail.tsinghua.edu.cn}
\affiliation{\YMSC}
\affiliation{\thu}
\affiliation{\bimsa}

\begin{abstract}

Quantum neural networks constitute a key class of near-term quantum learning models, yet their training dynamics remain not fully understood. Here, we present a unified theoretical framework for the frequency principle (F-principle) that characterizes the training dynamics of both classical and quantum neural networks. Within this framework, we prove that quantum neural networks exhibit a spectral bias toward learning low-frequency components of target functions, mirroring the behavior observed in classical deep networks. 
We further analyze the impact of noise and show that, when single-qubit noise is applied after encoding-layer rotations and modeled as a Pauli channel aligned with the rotation axis, the Fourier component labeled by $\bm{\omega}$ is suppressed by a factor $(1-2\gamma)^{\|\bm{\omega}\|_1}$. This leads to exponential attenuation of high-frequency terms while preserving the learnability of low-frequency structure.
In the same setting, we establish that the resulting noisy circuits admit efficient classical simulation up to average-case error. Numerical experiments corroborate our theoretical predictions: Quantum neural networks primarily learn low-frequency features during early optimization and maintain robustness against dephasing and depolarizing noise acting on the encoding layer. Our results provide a frequency-domain lens that unifies classical and quantum learning dynamics, clarifies the role of noise in shaping trainability, and guides the design of noise-resilient quantum neural networks.

\end{abstract}

\maketitle
% ------------------------ MAIN BODY ------------------------

\section{Introduction}

Deep neural networks (DNNs) are hierarchical computational models whose layered structures learn increasingly abstract representations through gradient-based optimization~\cite{SCHMIDHUBER201585,lecun2015deep,goodfellow2016deep}. Their capacity to extract multi-scale features has driven major advances in image and speech recognition, natural language processing, and scientific computing~\cite{doi:10.1126/science.aag2302,GU2018354,10.1007/s10462-020-09825-6,doi:10.1126/science.aaa8415,10.5555/3295222.3295349,Luo_2019,PhysRevLett.124.020503,10.1145/3422622,hermann2020deep,jiang2022review,jumper2021highly,zhang2025variational}. 
Motivated by these developments and by the pursuit of practical quantum advantages, a wide range of quantum machine-learning approaches have been proposed~\cite{Schuld03042015,biamonte2017quantum,das2019machine,PhysRevResearch.2.033212,lloyd2013quantum,DunjkoPRL,biamonte2017quantum,abbas2021power,liu2021rigorous,PhysRevX.12.021037,gyurik2023exponential,zhao2025entanglement,zhang2024quantumclassical,ye2025quantum}, among which quantum neural networks (QNNs) serve as an important paradigm. QNNs train parameterized quantum circuits on classical or quantum data, enabling expressive quantum models that are compatible with near-term hardware. They have been explored for supervised learning, generative modeling, and quantum many-body applications~\cite{du2025quantum}.

Despite significant methodological advances, a central problem lies in understanding the optimization dynamics that govern how these models learn.
On the classical side, extensive studies have characterized the behavior of DNNs during training, revealing properties of their loss landscapes, uncovering phenomena such as saddle points~\cite{10.5555/2969033.2969154}, the geometry of local minima~\cite{keskar2016large}, and the implicit regularization effects of gradient-based optimizers~\cite{pmlr-v38-choromanska15,JMLR:v19:18-188}. 
These insights have helped explain why large, overparameterized models generalize well despite their expressive power.
On the quantum side, QNNs face optimization challenges that stem from the structure of parameterized quantum circuits themselves. 
Highly expressive circuits can approximate unitary designs, leading to a concentration of measure in high-dimensional Hilbert spaces and consequently to barren plateaus, where gradients vanish exponentially with system size~\cite{mcclean2018barren,cerezo2021cost}. 
This phenomenon further contributes to intricate trade-offs between expressibility and trainability~\cite{PRXQuantum.3.010313,Arrasmith_2022}.
Recent analyses of overparameterized quantum models have provided valuable insights into the training dynamics~\cite{PhysRevLett.130.150601,shirai2024quantum,PRXQuantum.3.030323}, but these remain tied to specific architectures, cost functions, or noise assumptions. This motivates a central question: does a common principle underlie the training dynamics of both classical and quantum neural networks?

In this work, we establish a unified theoretical framework demonstrating that the frequency principle (F-principle) governs early-stage training dynamics in both classical and quantum neural networks.
Although previous empirical and numerical studies have suggested that the F-principle may serve as a common organizing perspective~\cite{luo2019theory,xu2019training,Zhang_2021,xu2024overview}, its status has remained largely phenomenological: gradient-based learners tend to capture the low-frequency components of a target function before fitting its high-frequency structure. This spectral bias was first identified in classical deep networks~\cite{NEURIPS2018_5a4be1fa,NEURIPS2019_0d1a9651,John_Xu_2020,pmlr-v97-rahaman19a,cao2019towards} and has since been linked to their generalization behavior in highly overparameterized regimes~\cite{Zhang_2021}. More recently, numerical evidence indicates that quantum models exhibit analogous low-frequency preference~\cite{xu2024frequency,duffy2025spectral}, suggesting that the F-principle may reflect a deeper, architecture-independent feature of gradient-driven optimization. Here, we rigorously prove that under gradient-flow dynamics, low-frequency modes are learned substantially faster than high-frequency ones, establishing the F-principle as a rigorous and model-agnostic law.
To bridge classical and quantum learning within a single mathematical formulation, we represent classical data over $\mathbb{R}^d$ (or $\mathbb{T}^d$ after encoding) and quantum pure states over the complex projective space $\mathbb{CP}^{d-1}$, showing that both settings share the same fundamental spectral bias within this unified smooth-manifold framework.

\textit{Noise analysis}\textemdash While our first result establishes a unified F-principle for noiseless training dynamics, practical quantum devices inevitably operate in the presence of noise~\cite{wang2021noise,doi:10.1126/sciadv.adr5002,PhysRevLett.133.120603,aharonov2023polynomial,fontana2025classical}. Prior work has shown that specific circuit architectures under dephasing noise exhibit attenuation of high-frequency spectral components~\cite{fontana2025classical}. We further broaden this picture by analyzing a general class of parameterized quantum circuits in which noise is inserted after each rotation gate as a single-term Pauli channel aligned with the gate's rotation axis. In this setting, we prove that such noise universally induces exponential suppression of high-frequency Fourier components, while leaving low-frequency components comparatively intact. As a consequence, the gradients associated with smooth, low-frequency structure remain stable, implying that noisy QNNs continue to learn low-frequency features effectively during the early stages of optimization. We further verify that the time derivative of the low-frequency loss still dominates that of the full loss under this noise model, so the unified F-principle persists in the noisy regime.

\textit{Classical simulation}\textemdash Using the same Pauli path-integral representation, this noise-induced spectral suppression provides a new perspective on the classical simulability of noisy parameterized quantum circuits. Since the contribution of Fourier modes with higher Fourier frequencies decay exponentially under axis-aligned Pauli noise, the effective spectrum of the circuit becomes sharply concentrated in the low-frequency region. This allows us to truncate the Fourier expansion at a logarithmic frequency cutoff, yielding a polynomial-time classical algorithm that approximates noisy circuit expectation values up to a guaranteed average-case error. Accordingly, the frequency-domain viewpoint casts noisy dynamics as an effective band-limiting mechanism, thereby delineating regimes where classical simulations based on low-frequency truncation are suitable and highlighting parameter regimes in which quantum advantage may still persist.

Numerical experiments further corroborate the theoretical picture. In a one-dimensional regression task with a target 
\(f(x)=\frac{1}{M}\sum_{k=1}^{M}[\sin(kx)+\cos(kx)]\) for \(M=40\) on \(x\in[0,2\pi]\), we observe that quantum neural networks trained under rotation-axis-aligned single-term Pauli noise or depolarizing noise consistently learn the low-frequency components of the target function, while high-frequency details are progressively suppressed. This behavior matches our prediction that low-frequency modes dominate the early training dynamics even in the presence of noise. The interplay between the intrinsic F-principle and noise-induced spectral filtering suggests that QNNs retain a notable degree of robustness against encoding-layer noise when the underlying task is governed by smooth structure, potentially reducing the need for expensive error-mitigation procedures in such settings~\cite{Field:87,_1994,olshausen1996emergence,simoncelli2001natural}.

Our paper is organized as follows.  
Section~\ref{sec:2} presents the main results. Section~\ref{sec:2.1} establishes a unified F-principle for both QNNs and DNNs, providing a common framework that connects their early-stage training dynamics.  
Section~\ref{sec:2.3} analyzes the effect of realistic noise, demonstrating the exponential suppression of high-frequency components, its compatibility with the unified F-principle, and the resulting polynomial-time classical simulability of noisy parameterized quantum circuits.  
Section~\ref{sec:QNN_Robustness} reports numerical evidence showing that QNNs remain robust to encoding-layer noise when learning low-frequency structure.  
Finally, Section~\ref{sec:Discussion} summarizes our findings and outlines broader implications and future research directions.

\begin{figure*}[htb!]
 \centering
 \includegraphics[width = \textwidth]{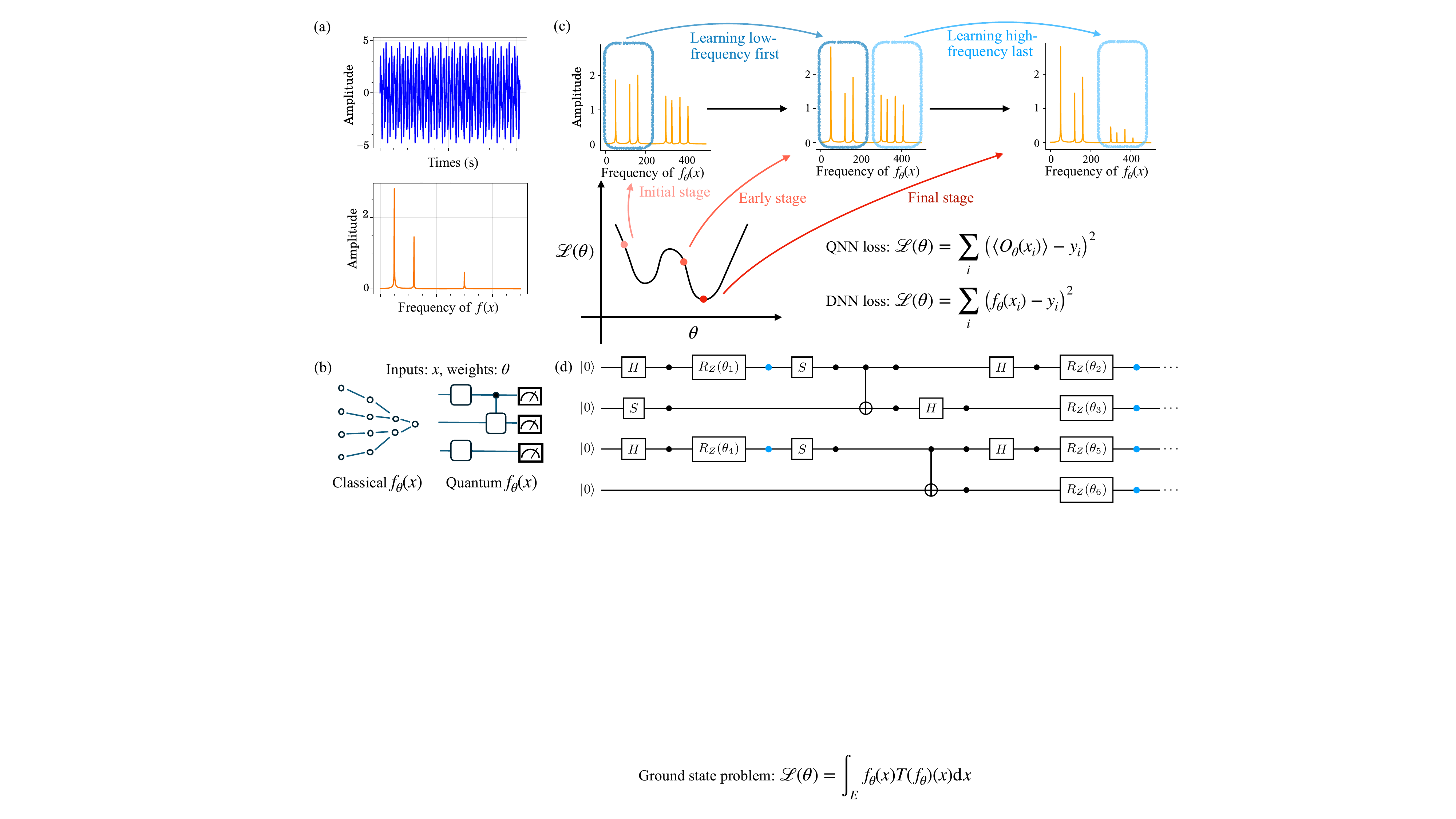}
 \caption{\justifying
Illustration of the F-principle in classical and quantum neural networks.
(a) Upper panel: The time-domain waveform of a target function $f(\bm x)$, exhibiting oscillatory behavior. Lower panel: Its corresponding frequency-domain representation obtained via Fourier transformation, showing distinct frequency components (peaks at different frequencies).
(b) Architectures of classical and quantum neural networks that output $f_{\bm \theta}(\bm x)$, where $\bm x$ denotes the input and $\bm \theta$ denotes the set of tunable parameters. The classical model (left) consists of layered neurons, while the quantum model (right) is implemented as a parameterized quantum circuit followed by projective measurements.
(c) Top: Frequency-domain representations of the QNN output $f_{\bm \theta}(\bm x)$ at different stages of training. From left to right: initial, early, and final stages. The orange lines represent the spectral components of $f_{\bm \theta}(\bm x)$. The dark blue boxes highlight low-frequency components, which are learned rapidly and converge during the early stage. The light blue boxes indicate high-frequency components, whose convergence is delayed until the late stage. This stage-wise convergence behavior illustrates the F-principle.
Bottom: Schematic loss landscapes $\mathcal{L}(\bm \theta)$ for classical and quantum neural network models. The QNN loss is given by $\mathcal{L}(\bm \theta) = \sum_i \left(\langle O_{\bm \theta}(\bm x_i)\rangle - y_i\right)^2$, while the DNN loss takes the form $\mathcal{L}(\bm \theta) = \sum_i \left(f_{\bm \theta}(\bm x_i) - y_i\right)^2$.
(d) Schematic of a parameterized quantum circuit under two noise models. Single-qubit operations ($H$, $S$, $R_Z(\theta_i)$) and CNOT gates are interleaved throughout the circuit. Solid dots mark points at which a local noise channel is applied.  We consider two noise scenarios: (1) dephasing only at the blue-dot positions (black dots remain noise-free), and (2) depolarizing noise at both black and blue dot positions.  The ellipses on the right indicate that the pattern repeats in subsequent layers.
 }
 \label{fig:whole_picture_MPO_QEM}
\end{figure*}

\section{Results}
\label{sec:2}
This section summarizes the main contributions of this work. Section~\ref{sec:2.1} develops a unified F-principle on smooth data manifolds and applies it to both classical deep networks and quantum neural networks, providing a common description of their early training dynamics. Section~\ref{sec:2.3} introduces a frequency-domain analysis of noisy parameterized quantum circuits, establishing the exponential suppression of high-frequency modes, the persistence of the F-principle under realistic noise, and a polynomial-time classical simulation method based on frequency truncation. Section~\ref{sec:QNN_Robustness} reports numerical experiments that demonstrate the predicted low-frequency bias and the noise-induced spectral filtering in a representative regression task.

\subsection{Unified Frequency Principle}
\label{sec:2.1}
This subsection formulates a unified version of the F-principle that applies to both classical and quantum neural networks. We treat QNNs with either classically encoded inputs or genuine quantum input states within the same manifold-based framework: classical data are embedded through rotation gates in the encoding layer, whereas quantum data enter directly as initial states.  The results confirm the validity of the frequency principle across both classical and quantum settings, providing a comprehensive framework for understanding neural network behavior in these domains.

\subsubsection{Frequency Principle}

The F-principle can be formally described by considering a neural network as a parameterized function \( f_{\bm \theta}(\bm x) \), where \( \bm x \) represents the input and \( \bm \theta \) denotes the learnable parameters (weights). During the training process, the network adjusts these parameters to minimize a given loss function, typically the mean squared error:
\begin{equation}\label{eq:continuous_loss}
   \mathcal{L}(\bm \theta) = \int_\mathcal{D}\|f_{\bm \theta}(\bm x)-y(\bm x)\|^2 \rm d\bm x, 
\end{equation}
where $\mathcal{D}$ is the training dataset.

In the frequency domain, we divide the loss function into a low-frequency part $\mathcal{L}_\lambda(\bm \theta) =\int_\mathcal{D}\|E_\lambda(f_{\bm \theta}(\bm x)-y(\bm x))\|^2 \rm dx$ and a high-frequency part $\mathcal{L}(\bm \theta)-\mathcal{L}_\lambda(\bm \theta)$ by applying a spectrum projector $E_\lambda$ that is associated with a specific positive operator onto the frequency
space $[0, \lambda]$. The specific definition of the spectrum projector $E_\lambda$ and the mathematical details are stated in Section~\ref{subsection:: Unified Theorem} and Supp.~\ref{sec:append:new_statement}.
The F-principle asserts that, during training, the neural network initially learns the low-frequency components $\mathcal{L}_\lambda(\bm \theta)$ of the target function, which correspond to its smooth, global features. Subsequently, it progressively adjusts to the high-frequency components, which capture rapid oscillations and finer details. In other words, for smaller values of \( |\lambda| \), $E_\lambda f_{\bm \theta}(\bm x)$ converges to $E_\lambda y(\bm x)$ faster than the larger values of \( |\lambda| \) (see Fig.~\ref{fig:whole_picture_MPO_QEM}).

\subsubsection{Unified Theorem}\label{subsection:: Unified Theorem}
A key technical challenge in developing a unified theorem is that classical and quantum learning models operate on fundamentally different data spaces: standard DNNs process inputs in Euclidean domains (e.g., $\mathbb{R}^d$ or, after periodic encodings, torus $\mathbb{T}^d$), whereas QNNs act on quantum data that naturally live on curved manifolds, such as the complex projective space of pure states $\mathbb{CP}^{d-1}$ (or, more generally, on the manifold of density operators).
To overcome this gap, we abstracted the training problem onto a general smooth manifold that accommodates both cases. By treating the loss function as a function defined on such a manifold and analyzing its evolution under gradient flow, we were able to derive a spectral characterization that applies uniformly across these distinct settings. This geometric perspective provided the conceptual bridge that enabled us to prove a single theorem capturing the frequency principle for both DNNs and QNNs.

We regard the dataset $\{(\boldsymbol{x}_i, y_i)\}_{i=1}^N$ as i.i.d. samples from a smooth manifold $\mathcal{M}$ endowed with a probability measure $\mu$. The target values are generated by a function $y:\mathcal{M}\!\to\!\mathbb{R}$ (or $\mathbb{R}^m$ in the vector-valued case), with $y_i = y(\boldsymbol{x}_i)$. Writing the pointwise error as
$g_{\boldsymbol{\theta}}(\boldsymbol{x}) := f_{\boldsymbol{\theta}}(\boldsymbol{x}) - y(\boldsymbol{x}),$
the empirical squared loss
$\mathcal{L}(\boldsymbol{\theta}) = \frac{1}{N}\sum_{i=1}^N \bigl\|f_{\boldsymbol{\theta}}(\boldsymbol{x}_i) - y_i\bigr\|^2$
is the Monte Carlo estimator of the population risk $
\mathbb{E}_{\boldsymbol{x}\sim \mu}\!\left[\bigl\|g_{\boldsymbol{\theta}}(\boldsymbol{x})\bigr\|^2\right]
= \mathbb{E}_{\boldsymbol{x}\sim \mu}\!\left[\bigl\|f_{\boldsymbol{\theta}}(\boldsymbol{x}) - y(\boldsymbol{x})\bigr\|^2\right].$
Accordingly, the empirical objective $\mathcal{L}(\boldsymbol{\theta})$ is consistent with Eq.~\eqref{eq:continuous_loss} as a Monte Carlo approximation to the continuous loss defined over $\mathcal{M}$.

Training proceeds through methods such as stochastic gradient descent~\cite{pmlr-v28-sutskever13} or Adam~\cite{kingma2014adam}, which can be viewed as discrete gradient-descent iterations. In these methods, the parameters $\boldsymbol{\theta}$ are updated iteratively with learning rate $\eta$. As the step size $\eta \to 0$ and the number of iterations grows, this discrete rule converges to a continuous gradient flow governed by
\begin{equation}\label{eq: train dynamics}
\frac{\rm d\boldsymbol{\theta}(t)}{\rm dt} = -\nabla_{\boldsymbol{\theta}} \mathcal{L}(\boldsymbol{\theta}),
\end{equation}
with initial condition $\boldsymbol{\theta}(0)=\boldsymbol{\theta}_0$. The trajectory $\boldsymbol{\theta}(t)$ thus describes a continuous descent of the loss functional.

At this point, we have established that the training process can be represented as a gradient flow on the data manifold $\mathcal{M}$, where $\mathcal{L}(\boldsymbol{\theta})$ is the population mean-squared error over $\mathcal{M}$. To analyze how errors are reduced across scales in the frequency domain, we introduce a low-high frequency splitting of the error function $g_{\boldsymbol{\theta}}(\boldsymbol{x})=f_{\boldsymbol{\theta}}(\boldsymbol{x})-y(\boldsymbol{x})$ via the spectral resolution of a fixed positive self-adjoint operator on $\mathcal{M}$ (e.g., the Laplace-Beltrami operator $\Delta$; see Supp.~\ref{sec:append:new_statement} for full details). Let $E_\lambda$ denote the spectral projector onto frequencies in $[0,\lambda]$. Operationally, $E_\lambda$ acts on any $h\in L^2(\mathcal{M})$ by retaining only its components with spectral parameter $\xi\le \lambda$, while $(\mathbb{I}-E_\lambda)$ removes these components and retains the complement. We refer to $\lambda>0$ as the spectral cutoff.
With this notation, we define the low-frequency loss as
\[
\mathcal{L}_\lambda(\boldsymbol{\theta}) \;=\; \mathbb{E}_{\boldsymbol{x}\sim \mu}\!\left[\bigl\|E_\lambda g_{\boldsymbol{\theta}}(\boldsymbol{x})\bigr\|^2\right],
\]
and the complementary high-frequency loss as
\[
\mathcal{L}_{>\lambda}(\boldsymbol{\theta}) \;=\; \mathbb{E}_{\boldsymbol{x}\sim \mu}\!\left[\bigl\|(\mathbb{I}-E_\lambda) g_{\boldsymbol{\theta}}(\boldsymbol{x})\bigr\|^2\right].
\]
By orthogonality of the spectral subspaces, the full loss admits the additive decomposition
\[
\mathcal{L}(\boldsymbol{\theta}) \;=\; \mathcal{L}_\lambda(\boldsymbol{\theta}) \;+\; \mathcal{L}_{>\lambda}(\boldsymbol{\theta}).
\]
Consequently, demonstrating that $\mathcal{L}_\lambda(\boldsymbol{\theta}(t))$ decays faster along the gradient flow than $\mathcal{L}_{>\lambda}(\boldsymbol{\theta}(t))$ establishes that low-frequency components of $f_{\boldsymbol{\theta}}$ align with those of $y$ earlier in training-precisely the content of the F-principle. All operator-theoretic definitions and the rigorous spectral identities underlying this decomposition are provided in Supp.~\ref{sec:append:new_statement}, where we also verify that the above splitting is well posed for functions in $H^{m}(\mathcal{M})$ under Assumption~\ref{assumption1}.

In Theorem~\ref{thm:main_theorem1}, we formalize this behavior under mild
regularity. Suppose $\nabla_{\boldsymbol{\theta}}\mathcal{L}(\boldsymbol{\theta}(t))$
is continuous in $t\in[0,T]$ and Lipschitz in $\boldsymbol{\theta}$, and that
there exists positive integer $m$ such that
both the inner product of \( g_{\bm\theta}(\bm x) \) with \( \Delta_{\bm x}^m g_{\bm\theta}(\bm x) \) over \( \mathcal{M} \) and the inner product of \( g_{\bm\theta}(\bm x) \) with \( \Delta_{\bm x}^m \Delta_{\bm\theta} g_{\bm\theta}(\bm x) \)
are uniformly bounded. In addition, we assume the training trajectory satisfies
\(
\bigl|\nabla_{\bm\theta}\mathcal{L}\bigl(\bm\theta(T)\bigr)\bigr|>0
\)
for some \(T>0\). Then there exists $\beta>0$ with
\[
\left|\frac{\rm d\mathcal{L}_\lambda(\boldsymbol{\theta}(t))/\rm dt}{\rm d\mathcal{L}(\boldsymbol{\theta}(t))/\rm dt}\right|
\;\ge\; 1-\beta\,\lambda^{-m},
\]
for all $t\in[0,T]$.
So the decay of the low-frequency loss dominates in early training, providing a
precise manifestation of the F-principle. All differentiability statements hold
almost everywhere with respect to $\mu$, i.e., at every point of the manifold except on a subset of measure zero with respect to the data measure $\mu$.

\begin{theorem}[Informal]\label{thm:main_theorem1}
Let $\mathcal{M}$ be smooth with positive self-adjoint $\Delta_{\boldsymbol{x}}$.
If $g_{\boldsymbol{\theta}}\in H^{m}(\mathcal{M})$ and the above energy bounds
and regularity conditions hold, then for any cutoff $\lambda>0$ the derivative
ratio satisfies
\[
\left|\,1-\frac{\dfrac{\rm d}{\rm dt}\mathcal{L}_{\lambda}(\boldsymbol{\theta}(t))}
                     {\dfrac{\rm d}{\rm dt}\mathcal{L}(\boldsymbol{\theta}(t))}\right|
\le \beta\,\lambda^{-m}.
\]
Consequently, during the early stage of training, the time derivative of the low-frequency loss dominates that of the full loss, formalizing the F-principle. A rigorous statement and proof are given
in Supp.~\ref{sec:append:new_statement}.
\end{theorem}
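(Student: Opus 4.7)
The plan is to express both $\dot{\mathcal{L}}$ and $\dot{\mathcal{L}}_\lambda$ as $L^2(\mathcal{M},\mu)$-inner products against the velocity $\dot g_{\bm\theta}$, isolate the high-frequency residual via the complementary projector $\mathbb{I}-E_\lambda$, and then convert that residual into a factor $\lambda^{-m}$ through the spectral theorem for $\Delta_{\bm x}$. Since $E_\lambda$ is self-adjoint and idempotent, $\mathcal{L}(\bm\theta)=\langle g_{\bm\theta},g_{\bm\theta}\rangle$ and $\mathcal{L}_\lambda(\bm\theta)=\langle g_{\bm\theta},E_\lambda g_{\bm\theta}\rangle$. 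Differentiating along the flow $\dot{\bm\theta}=-\nabla_{\bm\theta}\mathcal{L}$ (with differentiation under the integral sign justified by the assumed Lipschitz regularity of $\nabla_{\bm\theta}\mathcal{L}$) gives $\dot{\mathcal{L}}=2\langle g,\dot g\rangle$ and $\dot{\mathcal{L}}_\lambda=2\langle E_\lambda g,\dot g\rangle$. Subtracting yields the key identity
\begin{equation*}
\dot{\mathcal{L}}-\dot{\mathcal{L}}_\lambda=2\,\langle(\mathbb{I}-E_\lambda)g,(\mathbb{I}-E_\lambda)\dot g\rangle,
\end{equation*}
so the entire question reduces to estimating how much of $g$ and $\dot g$ lives above the spectral cutoff.

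The heart of the proof is to upgrade this identity to a quantitative bound. The spectral theorem for the positive self-adjoint $\Delta_{\bm x}$ yields $\|(\mathbb{I}-E_\lambda)h\|^2\le\lambda^{-m}\langle h,\Delta_{\bm x}^m h\rangle$ for every $h\in H^m(\mathcal{M})$, via the elementary estimate $\int_\lambda^\infty d\langle E_\xi h,h\rangle\le\lambda^{-m}\int_\lambda^\infty\xi^m\,d\langle E_\xi h,h\rangle$. Applying this to both $g$ and $\dot g$ and invoking Cauchy--Schwarz gives
\begin{equation*}
|\dot{\mathcal{L}}-\dot{\mathcal{L}}_\lambda|\le 2\lambda^{-m}\sqrt{\langle g,\Delta_{\bm x}^m g\rangle\cdot\langle\dot g,\Delta_{\bm x}^m\dot g\rangle}.
\end{equation*}
The first factor is uniformly bounded by the first energy hypothesis. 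For the second, I would expand $\dot g=-\sum_k(\partial_{\theta_k}\mathcal{L})\,\partial_{\theta_k}g$, use that the scalar coefficients $\partial_{\theta_k}\mathcal{L}$ commute with $\Delta_{\bm x}$, and apply Cauchy--Schwarz in $k$ to obtain $\langle\dot g,\Delta_{\bm x}^m\dot g\rangle\le\|\nabla_{\bm\theta}\mathcal{L}\|^2\sum_k\langle\partial_{\theta_k}g,\Delta_{\bm x}^m\partial_{\theta_k}g\rangle$. Since $\Delta_{\bm x}$ and $\partial_{\theta_k}$ act on independent variables and hence commute, differentiating $\langle g,\Delta_{\bm x}^m g\rangle$ twice in $\bm\theta$ and summing over $k$ yields the identity $\sum_k\langle\partial_{\theta_k}g,\Delta_{\bm x}^m\partial_{\theta_k}g\rangle=\tfrac12\Delta_{\bm\theta}\langle g,\Delta_{\bm x}^m g\rangle-\langle g,\Delta_{\bm x}^m\Delta_{\bm\theta}g\rangle$, each term of which is absorbed into the two energy hypotheses.

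Finally, $|\dot{\mathcal{L}}(t)|=\|\nabla_{\bm\theta}\mathcal{L}(\bm\theta(t))\|^2$ is continuous on $[0,T]$ by hypothesis and nonvanishing at $T$; by ODE uniqueness for the Lipschitz flow, a trajectory whose gradient is nonzero at $T$ cannot have passed through a critical point earlier (else the constant trajectory would be another solution through the same point), so $|\dot{\mathcal{L}}(t)|\ge c>0$ uniformly on $[0,T]$ for some $c$. Dividing the numerator bound by $|\dot{\mathcal{L}}|$ then delivers $|1-\dot{\mathcal{L}}_\lambda/\dot{\mathcal{L}}|\le\beta\lambda^{-m}$ with $\beta$ collecting the two energy constants, a uniform bound on $\|\nabla_{\bm\theta}\mathcal{L}\|$, and $1/c$. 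The hard part will be the middle step: legitimizing the interchange of $\Delta_{\bm x}^{m/2}$ with the parameter derivatives $\partial_{\theta_k}$ inside $L^2(\mathcal{M},\mu)$, and packaging the resulting terms using exactly the two assumed energy bounds, which is where the mixed bound on $\langle g,\Delta_{\bm x}^m\Delta_{\bm\theta}g\rangle$ is indispensable, since without it one cannot propagate $H^m$-regularity from $g$ to $\dot g$ along the flow.
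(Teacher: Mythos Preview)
Your overall strategy coincides with the paper's: express $\dot{\mathcal{L}}-\dot{\mathcal{L}}_\lambda$ as a high-frequency residual, insert $\lambda^{-m}\xi^m$ via the spectral theorem, and lower-bound $|\dot{\mathcal{L}}|=|\nabla_{\bm\theta}\mathcal{L}|^2$ on $[0,T]$ through the ODE-uniqueness argument. The key identity $\dot{\mathcal{L}}-\dot{\mathcal{L}}_\lambda=2\langle(\mathbb{I}-E_\lambda)g,(\mathbb{I}-E_\lambda)\dot g\rangle$ and the final packaging of $\beta$ are both correct.

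The genuine gap is in your middle step. After the second Cauchy--Schwarz you need a uniform bound on $\sum_k\langle\partial_{\theta_k}g,\Delta_{\bm x}^m\partial_{\theta_k}g\rangle$, and you rewrite this via
\[
\sum_k\langle\partial_{\theta_k}g,\Delta_{\bm x}^m\partial_{\theta_k}g\rangle=\tfrac12\Delta_{\bm\theta}\langle g,\Delta_{\bm x}^m g\rangle-\langle g,\Delta_{\bm x}^m\Delta_{\bm\theta}g\rangle,
\]
claiming both terms are covered by the two energy hypotheses. The second term is indeed $\le C_2$ in absolute value. But the first term is the $\bm\theta$-Laplacian of the scalar $h(\bm\theta)=\langle g_{\bm\theta},\Delta_{\bm x}^m g_{\bm\theta}\rangle$, and the hypothesis $|h(\bm\theta)|\le C_1$ gives you no control whatsoever on $|\Delta_{\bm\theta}h(\bm\theta)|$: a bounded function can have arbitrarily large second derivatives. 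Under the stated assumptions this term is simply not bounded, so your chain breaks here.

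The paper avoids this by never forming $\langle\dot g,\Delta_{\bm x}^m\dot g\rangle$. Instead of applying Cauchy--Schwarz in $L^2(\mathcal{M})$ first (which is what forces you to control the $H^m$-norm of $\dot g$), it keeps the parameter-gradient structure $\dot{\mathcal{L}}-\dot{\mathcal{L}}_\lambda=-\nabla_{\bm\theta}\mathcal{L}_{>\lambda}\cdot\nabla_{\bm\theta}\mathcal{L}$, applies Cauchy--Schwarz in $\mathbb{R}^N$ so that one factor of $|\nabla_{\bm\theta}\mathcal{L}|$ cancels against the denominator, and then bounds the remaining scalar $|\nabla_{\bm\theta}\mathcal{L}_{>\lambda}|$ by working \emph{inside} the spectral integral: insert $\xi^m/\lambda^m$, then use Cauchy--Schwarz followed by AM--GM per frequency $\xi$ before integrating. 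This lands directly on $\tfrac{\sqrt N}{2}(C_1+C_2)\lambda^{-m}$, with $C_2$ entering as the bound on $\int\xi^m\,d\langle\Delta_{\bm\theta}g_{\bm\theta},g_{\bm\theta}\rangle$ itself and no second $\bm\theta$-derivative of $\langle g,\Delta_{\bm x}^m g\rangle$ ever appearing.
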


\subsubsection{Classical Neural Networks}

Having established the unified theorem, we now specialise it to conventional (classical) neural networks.  
Let the input vector \(\bm x \in \mathbb{R}^{d}\) be drawn from a probability distribution \(\mu(\bm x)\) supported on a manifold \(\mathcal{M}\subseteq\mathbb{R}^{d}\).  Common architectures-such as multilayer perceptrons, convolutional neural networks, and transformers-implement a mapping
$f_{\bm\theta}\colon \mathcal{M} \longrightarrow \mathbb{R}^{k}$, 
which is differentiable almost everywhere in both the data \(\bm x\) and the parameters \(\bm\theta\).

Under mild regularity assumptions-namely, that \(\nabla_{\bm\theta}f_{\bm\theta}\) is a.e. Lipschitz and that the spectrum of the data-encoding Jacobian grows at most polynomially-the hypotheses of Theorem~\ref{thm:main_theorem1} are met.  Consequently, the low-frequency component of the training error decays more rapidly than the full error, providing a rigorous confirmation of the F-principle for classical networks.  This theoretical result is in line with, and extends, earlier empirical and analytical studies on the spectral bias of deep learning models~\cite{John_Xu_2020,pmlr-v97-rahaman19a,luo2019theory,cao2019towards,xu2019training,Zhang_2021,xu2024overview}.

\subsubsection{Quantum Neural Networks}

Quantum neural networks provide a natural quantum-mechanical analogue of classical nets and arise most commonly as parameterized quantum circuits~\cite{Benedetti_2019,10.21468/SciPostPhysLectNotes.61}.  
We consider two data modalities and show that each fits into the framework of Theorem~\ref{thm:main_theorem1}.

\paragraph{Classical input data.}
When the inputs are classical, each sample \(\bm x\) is first embedded into the Hilbert space of \(d\) qubits by an encoding circuit \(U_{\text{enc}}(\bm x)\), typically realized as layers of single-qubit rotations interleaved with Clifford gates.  
A trainable ansatz \(U_{\text{ans}}(\bm\theta)\) is then applied, preparing the state
\begin{equation}
        \rho_{\text{out}}(\bm x,\bm\theta)
        = U_{\text{ans}}(\bm\theta)\,
          U_{\text{enc}}(\bm x)\,\rho_0\,
          U_{\text{enc}}(\bm x)^{\dagger}
          U_{\text{ans}}(\bm\theta)^{\dagger},
\end{equation}
where \(\rho_0 = |0\rangle\!\langle 0|^{\otimes d}\) is the computational-basis vacuum.  
The ansatz \(U_{\text{ans}}(\bm\theta)\) has a similar circuit structure to the encoding layer, but with all rotation angles replaced by trainable parameters \(\bm\theta\).
The network output is the expectation value of a Hermitian observable \(O\): $    f_{\bm\theta}(\bm x)=\operatorname{Tr}\!\bigl[\,O\,\rho_{\text{out}}(\bm x,\bm\theta)\bigr].$

Although the raw data lie on the Euclidean manifold \(\mathbb{R}^{d}\), our encoding circuit wraps each real coordinate \(x_{k}\) into a rotation phase
\(
    \varphi_{k}=x_{k}\bmod 2\pi,
\)
or equivalently \(e^{i x_{k}}\).
Because \(\varphi_{k}\) is \(2\pi\)-periodic, every dimension is topologically a circle \(S^{1}\).
The full encoded sample therefore lives on the \(d\)-fold product
\(
    (S^{1})^{d}\equiv \mathbb{T}^{d},
\)
the \(d\)-torus of rotation angles.  
We thus take the data manifold to be \(\mathcal{M}=\mathbb{T}^{d}\).

The smooth dependence of both \(U_{\text{enc}}\) and \(U_{\text{ans}}\) on their arguments, together with the bounded spectrum of \(O\),  fulfills the regularity requirements of Theorem~\ref{thm:main_theorem1}.  
Hence the F-principle holds for QNNs with classically encoded data.  All parameter gradients are evaluated with the parameter-shift rule~\cite{PhysRevA.98.032309,PhysRevA.99.032331,Wierichs2022generalparameter}.

\paragraph{Quantum input data.}
When each training example is already a quantum state \(\rho_{x}\), no encoding layer is required; the parameterized circuit \(U(\bm\theta)\) acts directly on \(\rho_x\): 
    $f_{\bm\theta}(x)=\operatorname{Tr}\!\bigl[\,O\,U(\bm\theta)\,\rho_{x}\,U(\bm\theta)^{\dagger}\bigr]$.
The natural domain manifold is a submanifold of the Hilbert space of density operators, often modelled as the complex projective space \(\mathbb{CP}^{d}\):
For pure‐state inputs \(\rho_{x}=|\psi_{x}\rangle\!\langle\psi_{x}|\) on \(n\) qubits (\(D=2^{n}\) Hilbert dimension), this manifold is naturally identified with the complex projective space
$\mathbb{CP}^{D-1}
      \;:=\;
      \bigl(\mathbb{C}^{D}\!\setminus\!\{0\}\bigr)\big/\!\sim,
      \quad
      \psi \,\sim\, e^{i\phi}\psi$,
the set of one-dimensional complex lines (rays) in \(\mathbb{C}^{D}\) obtained by quotienting out the physically irrelevant global phase.  
For mixed inputs, one instead works on the convex, compact manifold of positive-semidefinite, trace-one Hermitian operators, embedded in the real vector space of Hermitian matrices.

Because the map \((\rho_{x},\bm\theta)\mapsto U(\bm\theta)\rho_{x}U(\bm\theta)^{\dagger}\) is smooth and the observable \(O\) has the bounded spectrum, the regularity conditions of Theorem~\ref{thm:main_theorem1} are satisfied.  Accordingly, the F-principle also governs QNNs trained directly on quantum data.

\subsection{Classical Simulation of Parameterized Quantum Circuits under Generalized Noise} 
\label{sec:2.3}

We next study parameterized quantum circuits with classical inputs under the broad noise model introduced in Eq.~\eqref{eq:generalized_noise_model}. By rewriting noisy expectation values as Pauli path integrals, we obtain a frequency-domain representation in which each Fourier mode is damped by a factor that decays exponentially with its $\ell^1$-frequency, as formalized in Theorem~\ref{thm:frequency_suppression}. This structure underlies both our analysis of the F-principle in the noisy setting (Corollary~\ref{thm: thm1 holds in thm2 case}) and a frequency-truncation scheme that yields an efficient classical-simulation algorithm with a rigorous average-case error guarantee (Theorem~\ref{thm: 3}).

\subsubsection{Noise-Induced Suppression of High-Frequency Components}
In this subsection, we extend the results of Ref.~\cite{fontana2025classical} to a  {broader setting}. 
We show that, under the generalized models stated below, the noise-induced suppression of high-frequency components in parameterized quantum circuits still holds; see Lemma~\ref{lemma:frequency_suppression} in Supp.~\ref{appendix:sec:noise_effect_frequency_components}.

We consider the following circuit model and noise model. 
\paragraph{Circuit model.}
An \(n\)-qubit parameterized quantum circuit of depth \(L\) is written
$C(\bm\theta)=U_L(\theta_L)\,\cdots\,U_1(\theta_1)$,
with parameters \(\bm\theta=(\theta_1,\dots,\theta_L)\).  
Each layer factorizes as
$U_i(\theta_i)=\exp{-i\theta_i P_i/2}C_i$,
where \(C_i\) is a fixed Clifford and $P_i \in \{I, X, Y, Z\}^{\otimes n}$.  
Considering unitary operations as quantum channels, we define 
\(\mathcal{C}(\bm{\theta})(\cdot)=C(\bm{\theta})\,(\cdot)\,C(\bm{\theta})^{\dagger}\).
For observable \(O\) with the bounded spectrum, $ \langle O(\bm{\theta})\rangle = \tr{O C(\bm{\theta})\rho C(\bm{\theta})^\dagger} = \tr{O \mathcal{C}(\bm{\theta})(\rho)},$
where \(\rho\) is the initial state.

\paragraph{Noise model.}
The noise introduced after the rotation gate $\exp{-i \theta_iP_i/2}$ is no longer limited to dephasing noise but takes the more general Pauli channel form
\begin{equation}\label{eq:generalized_noise_model}
    \mathcal{E}_i(\cdot) = (1-\gamma)\,\cdot + \gamma\, P_i(\cdot) P_i^\dagger,
\end{equation}
where $P_i$ is the axis of the rotation gate $U_i(\theta_i)$ and $\gamma$ denotes the noise rate. Here, the noise is applied only after the rotation gates, as shown in Fig.~\ref{fig:whole_picture_MPO_QEM}(d).
Compared with the commonly used noise model of local depolarizing noise with a constant noise rate~\cite{aharonov2023polynomial,PhysRevLett.133.120603,xct1-7kf2}, the noise we consider is weaker but commonly considered~\cite{cao2023generation,van2023probabilistic,kim2023evidence}.
Thus
\(
    \tilde{\mathcal{U}}_i(\theta_i)=\mathcal{E}_i\circ\mathcal{U}_i(\theta_i),
\)
and the noisy circuit is
$\tilde{\mathcal{C}}(\bm{\theta}) =\tilde{\mathcal{U}}_L(\theta_L) \circ \cdots \circ \tilde{\mathcal{U}}_1(\theta_1)  =\mathcal{E}_L \circ \mathcal{U}_L(\theta_L) \circ \cdots \circ \mathcal{E}_1 \circ \mathcal{U}_1(\theta_1)$.
Its expectation value reads $\langle O_{\text{noisy}}(\bm\theta)\rangle=\tr{O \tilde{\mathcal{C}}(\bm{\theta})(\rho) }.$

Rewrite the noiseless expectation value $ \langle O(\bm{\theta})\rangle $ as a Pauli path integral~\cite{aharonov2023polynomial,PhysRevLett.133.120603,lerch2024efficient,angrisani2025simulating,fontana2025classical,rudolph2025pauli}, which takes the form of a trigonometric series in $\bm{\theta}$, we make the spectral information more explicit.
For simplicity, the observable $O$ is taken to be a Pauli operator in $\{I,X,Y,Z\}^{\otimes n}$.  Using the Pauli path integral, one obtains
$  \langle O(\bm{\theta})\rangle 
        =\sum_{s\in \bm{P}^{L+1}_n}\Tr{Os_L}
          \left(\prod_{i=1}^{L}\Tr{s_i\mathcal{U}_i(\theta_i)(s_{i-1})}\right)
          \Tr{s_0 \rho}$, 
  where $s=(s_0,\dots,s_L)$ enumerates Pauli paths in $\bm{P}^{L+1}_n$, $\bm{P}_n = \left\{ \sfrac{\mathbb{I}}{\sqrt{2}}, \sfrac{X}{\sqrt{2}}, \sfrac{Y}{\sqrt{2}}, \sfrac{Z}{\sqrt{2}} \right\}^{\otimes n}$ and $n$ is qubit number.
Define $f(s,\bm{\theta})=
    \Tr{Os_L}
    \Bigl(\prod_{i=1}^{L}\Tr{s_i\mathcal{U}_i(\theta_i)(s_{i-1})}\Bigr)$.
Each term $\Tr{s_i\mathcal{U}_i(\theta_i)(s_{i-1})}$ is either a constant or a constant multiplied by $\sin(\theta_i)$ or $\cos(\theta_i)$ (see Supp.~\ref{app:sec:impact_of_noise_on_frequency_spectrum}). Hence, $f(s,\bm{\theta})$ is a trigonometric polynomial in the parameters $\bm{\theta}$. The number of sine and cosine factors is the frequency of the $f(s,\bm{\theta})$, denoted as $\#\mathrm{trig}(s)$. For different paths $s$, the trigonometric polynomials $f(s,\bm{\theta})$ are orthogonal to each other with respect to the inner product defined in $L^2(\mathbb{T}^L)$, where $\mathbb{T}^L$ is the $L$-dimensional torus.

In the presence of the noise model defined in Eq.~\eqref{eq:generalized_noise_model}, the expectation value of the observable becomes
\begin{equation*}
    \begin{aligned}
        \langle O_{\text{noisy}}(\bm\theta) \rangle
          & = \Tr{O\mathcal{E}_L\circ \mathcal{U}_L(\theta_L) \circ \cdots \circ
                  \mathcal{E}_1 \circ \mathcal{U}_1(\theta_1)(\rho)} \\
          &=\sum_{s\in \bm{P}^{L+1}_n}\tilde f(s,\bm{\theta})\Tr{s_0 \rho},
    \end{aligned}
\end{equation*}
where $  \tilde f(s,\bm{\theta})
     = \Tr{Os_L}
              \Bigl(\prod_{i=1}^{L}\Tr{\mathcal{E}_i(s_i)
                    \mathcal{U}_i(\theta_i)(s_{i-1})}\Bigr)$. Comparing $\tilde f(s,\bm{\theta})$ with $f(s,\bm{\theta})$ reveals the suppression mechanism.  For any non-zero trace term $\Tr{\mathcal{E}_i(s_i)\mathcal{U}_i(\theta_i)(s_{i-1})}$, we have:
\begin{enumerate}%[label=(A\arabic*)]

\item It contains a $\sin(\theta_i)$ or $\cos(\theta_i)$ factor iff $P_i$ anti-commutes with $s_{i-1}$.
\item In that case, the trace is multiplied by $(1-2\gamma)$; otherwise, it is unaffected.
\end{enumerate}
Hence
$\tilde f(s,\bm{\theta})
      =(1-2\gamma)^{\#\mathrm{trig}(s)}\,f(s,\bm{\theta})$,
where $\#\mathrm{trig}(s)$ counts the number of trigonometric factors in $f(s,\bm{\theta})$, which is the frequency of $f(s,\bm{\theta})$.

In a QNN with classical inputs, the parameters naturally split into the input vector $\bm{x}$ and the trainable circuit parameters $\bm{\theta}$. 
Fixing $\bm{\theta}$ and treating $\bm{x}$ as variables, we define
$f_{\bm{\theta}}(\bm{x}) := \langle O(\bm{x};\bm{\theta})\rangle$ as
the expectation value of the parameterized quantum circuit. 
Its noisy counterpart is $f_{\mathrm{noisy},\bm{\theta}}(\bm{x}) := \langle O_{\mathrm{noisy}}(\bm{x};\bm{\theta})\rangle.$ We assume the dimension of $\bm x$ is $l$, and the dimension of parameter $\bm \theta$ is then $L-l$.
Under this viewpoint, and building on the preceding derivation, we obtain the following theorem, which generalizes the result in Ref.~\cite{fontana2025classical}. 
The complete proof is given in Supp.~\ref{app:sec:impact_of_noise_on_frequency_spectrum}.

\begin{theorem}\label{thm:frequency_suppression}
Consider the circuit and observable $O$ defined above. 
If the noise introduced after each rotation gate $\exp{-i P_i\theta_i/2}$ is modeled by a Pauli channel $\mathcal{E}_i$ defined in Eq.~\eqref{eq:generalized_noise_model} with noise rate $\gamma$, then the noisy expectation value decomposed in the frequency domain is given by 
\begin{equation}\label{Eq: thm2 main eq}
    f_{\mathrm{noisy}, \bm{\theta}}(\bm{x}) = \sum_{\bm{\omega} \in \mathbb{Z}^l} (1 - 2\gamma)^{\|\bm{\omega}\|_1} \hat{f}_{\bm{\theta}}(\bm{\omega})\, e^{i \bm{\omega} \cdot \bm{x}},
\end{equation}
where $\hat{f}(\bm{\omega})$ are the Fourier coefficients of the noiseless expectation value
\begin{equation}
    f_{\bm{\theta}}(\bm{x}) = \sum_{\bm{\omega} \in \mathbb{Z}^l} \hat{f_{\bm{\theta}}}(\bm{\omega}) \, e^{i \bm{\omega} \cdot \bm{x}},
\end{equation}
and $\|\bm{\omega}\|_1$ denotes the $\ell_1$-norm of $\bm{\omega}$. 
\end{theorem}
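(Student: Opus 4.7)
The plan is to combine the Pauli path-integral representation already used to write $\tilde f(s,\bm{\theta})=(1-2\gamma)^{\#\mathrm{trig}(s)}\,f(s,\bm{\theta})$ with a Fourier expansion in $\bm{x}$, carefully distinguishing the trigonometric factors sourced by encoding rotations from those sourced by trainable rotations. With $\bm{\theta}$ fixed and $\bm{x}$ taken as the free variable, I would split the $L$ rotation gates into the $l$ encoding gates and the $L-l$ trainable gates, so that $\#\mathrm{trig}(s)=\#\mathrm{trig}_x(s)+\#\mathrm{trig}_\theta(s)$. The $\bm{\theta}$-dependent piece is then a multiplicative constant in $\bm{x}$ and can be absorbed into the noiseless Fourier coefficient $\hat{f}_{\bm{\theta}}(\bm{\omega})$ appearing in the statement.

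Next, I would invoke the two bullet observations stated just before the theorem: a factor from the $i$-th gate carries a $\sin$ or $\cos$ of its parameter iff $P_i$ anticommutes with $s_{i-1}$, and precisely in that case it also acquires the extra $(1-2\gamma)$. Because the paper fixes a one-to-one correspondence between parameters and rotation gates, each $x_k$ appears in a unique encoding rotation, so the number $n_k(s)$ of trig factors in $x_k$ lies in $\{0,1\}$, and the $x_k$-dependence of $f(s,\bm{x},\bm{\theta})$ collapses to at most a single $\cos(x_k)$ or $\sin(x_k)$. Euler's formula then shows that the Fourier support of $f(s,\cdot,\bm{\theta})$ is contained in $\{\bm{\omega}:|\omega_k|=n_k(s)\text{ for all }k\}$, so every path contributing to a given $\bm{\omega}$ satisfies $\sum_k n_k(s)=\|\bm{\omega}\|_1$.

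The remainder of the proof is assembly. I would expand $f(s,\bm{x},\bm{\theta})=\sum_{\bm{\omega}\in\mathbb{Z}^l} c(s,\bm{\theta},\bm{\omega})\,e^{i\bm{\omega}\cdot\bm{x}}$, multiply the noisy contribution of each path by $(1-2\gamma)^{\#\mathrm{trig}_x(s)}$, pull the common factor $(1-2\gamma)^{\|\bm{\omega}\|_1}$ outside the path sum for each fixed $\bm{\omega}$, and identify the residual sum over $s$ (weighted by $\Tr{s_0\rho}$) with $\hat{f}_{\bm{\theta}}(\bm{\omega})$ by comparison with the noiseless Fourier expansion. Since the Pauli-path sum is finite, the interchange of summations is immediate.

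The main obstacle I anticipate is the bookkeeping in the second step: one has to verify that for every Pauli path contributing to Fourier mode $\bm{\omega}$, the encoding-gate trig exponent is exactly $\|\bm{\omega}\|_1$, neither more nor less. This equality relies on the one-gate-per-input-component convention; if some $x_k$ were to appear in several encoding rotations, products of sinusoids in $x_k$ would populate Fourier components with $|\omega_k|$ strictly below $n_k(s)$, and the clean exponent $\|\bm{\omega}\|_1$ would have to be replaced by a path-dependent quantity. Making this path-by-path matching explicit at the level of the Pauli-path expansion is where I expect most of the care to be needed.
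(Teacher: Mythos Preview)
Your plan follows the same route as the paper: the Pauli path-integral expansion, the commute/anticommute case analysis that ties each trigonometric factor to a $(1-2\gamma)$ damping, and then a Fourier regrouping in the encoding variables. The paper's appendix in fact stops at the path-level identity $\tilde f(s,\cdot)=(1-2\gamma)^{\#\mathrm{trig}(s)}f(s,\cdot)$ and leaves the $\bm x$-Fourier step you spell out in paragraphs~2--3 implicit, so your write-up is a faithful and slightly more explicit version of the paper's argument, including the correct observation that the one-gate-per-$x_k$ convention forces $n_k(s)\in\{0,1\}$ and hence $\#\mathrm{trig}_x(s)=\|\bm\omega\|_1$ on the support.

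There is one bookkeeping inconsistency to close. You begin from the full suppression $(1-2\gamma)^{\#\mathrm{trig}(s)}$ with $\#\mathrm{trig}(s)=\#\mathrm{trig}_x(s)+\#\mathrm{trig}_\theta(s)$ (noise after \emph{every} rotation), but in paragraph~3 you multiply only by $(1-2\gamma)^{\#\mathrm{trig}_x(s)}$ and then identify the residual path-sum with the \emph{noiseless} $\hat f_{\bm\theta}(\bm\omega)$. If the trainable gates are also noisy, the leftover factor $(1-2\gamma)^{\#\mathrm{trig}_\theta(s)}$ is path-dependent and cannot be absorbed into the noiseless coefficient: what you actually obtain is $(1-2\gamma)^{\|\bm\omega\|_1}$ times a $\bm\theta$-coefficient that already carries the trainable-gate damping, not $\hat f_{\bm\theta}(\bm\omega)$ itself. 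The identity in the theorem with the literal noiseless $\hat f_{\bm\theta}(\bm\omega)$ holds verbatim only when the channels $\mathcal E_i$ act solely after the encoding rotations (the scenario used in the numerics and in Fig.~1(d), case~1), in which case $\#\mathrm{trig}_\theta$ contributes no damping and your identification is exact. State explicitly which reading you are proving; the mechanism is the same in both, but the conclusion differs in whether the coefficient multiplying $(1-2\gamma)^{\|\bm\omega\|_1}$ is the genuine noiseless Fourier coefficient or its trainable-noise-dressed analogue.
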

Formally, since the parameter space is a torus $\mathbb{T}^l$, the natural Fourier index set is $\mathbb{Z}^l$. However, for our Pauli-aligned ansatz, the Fourier support collapses to a finite subset, in fact $\{0,1\}^l$, because each rotation contributes only $0/1$ frequency along each parameter.

Theorem~\ref{thm:frequency_suppression} shows that, under the noise model in Eq.~\eqref{eq:generalized_noise_model}, the amplitude of each frequency component is exponentially suppressed by $(1-2\gamma)^{\|\boldsymbol{\omega}\|_1}$. As a result, high-frequency components decay rapidly, while low-frequency components remain comparatively stable. This provides a complementary perspective to Theorem~\ref{thm:main_theorem1}: in this setting, it is the low-frequency part of the noisy expectation value itself, rather than its time derivative, that dominates the overall behavior.

To demonstrate that Theorem~\ref{thm:main_theorem1} also extends to the noisy case, we further show that the time derivative of the low-frequency component,
$\mathrm{d}f_{\mathrm{noisy}, \boldsymbol{\theta}(t), \lambda}(\boldsymbol{x})/\mathrm{d}t$,
dominates that of the full noisy expectation value,
$\mathrm{d}f_{\mathrm{noisy}, \boldsymbol{\theta}(t)}(\boldsymbol{x})/\mathrm{d}t$.
Specifically, we verify that the key conditions (Eqs.~\ref{eq: C_1}-\ref{eq: C_2})  in Theorem~\ref{thm:main_theorem1} remain valid for the noisy function $f_{\mathrm{noisy}, \boldsymbol{\theta}}$. 
We define a positive self-adjoint operator $\Delta_{\boldsymbol{x}}$ by its action on the Fourier basis
$\Delta_{\boldsymbol{x}}\!\left[\hat{f}_{\boldsymbol{\theta}}(\boldsymbol{\omega}) e^{i\boldsymbol{\omega}\cdot \boldsymbol{x}}\right]
    := (1 - 2\gamma)^{-\|\boldsymbol{\omega}\|_1/m}\,
       \hat{f}_{\boldsymbol{\theta}}(\boldsymbol{\omega}) e^{i\boldsymbol{\omega}\cdot \boldsymbol{x}},$
so that $\{\hat{f}_{\boldsymbol{\theta}}(\boldsymbol{\omega}) e^{i\boldsymbol{\omega}\cdot \boldsymbol{x}}\}_{\boldsymbol{\omega}}$ forms an eigenbasis with eigenvalues
$\lambda_{\boldsymbol{\omega}} = (1 - 2\gamma)^{-\|\boldsymbol{\omega}\|_1/m}$.
It follows that
$\bigl|\langle f_{\mathrm{noisy}, \boldsymbol{\theta}}, \Delta_{\boldsymbol{x}}^{m}
f_{\mathrm{noisy}, \boldsymbol{\theta}}\rangle_{L^{2}(\mathcal{M})}\bigr|
  = \Bigl|\sum_{\boldsymbol{\omega}\in\mathbb{Z}_2^l}
           \lambda_{\boldsymbol{\omega}}^{m}(1-2\gamma)^{\|\boldsymbol{\omega}\|_1}\Bigr|
  \le 2^{l},$
and $
\bigl|\langle f_{\mathrm{noisy}, \boldsymbol{\theta}},
\Delta_{\boldsymbol{x}}^{m}\Delta_{\boldsymbol{\theta}}
f_{\mathrm{noisy}, \boldsymbol{\theta}}\rangle_{L^{2}(\mathcal{M})}\bigr|
  = \Bigl|\sum_{\boldsymbol{\omega}\in\mathbb{Z}_2^l}
           \|\boldsymbol{\omega}\|_2^{2}\lambda_{\boldsymbol{\omega}}^{m}(1-2\gamma)^{\|\boldsymbol{\omega}\|_1}\Bigr|
  \le l^{2} 2^{l}.$
Hence, setting $C_{1}=2^{l}$ and $C_{2}=l^{2}2^{l}$ ensures that the assumption of Theorem~\ref{thm:main_theorem1} holds in the noisy case. This leads to the following corollary.

\begin{corollary}\label{thm: thm1 holds in thm2 case}
Consider the circuit, observable $O$, and noise model defined above.  
If the training dynamics of $\boldsymbol{\theta}$ satisfy the assumptions of
Theorem~\ref{thm:main_theorem1}, then there exists a constant $\beta>0$ such that
\[
\left|\frac{\mathrm{d}f_{\mathrm{noisy}, \boldsymbol{\theta}(t), \lambda}(\boldsymbol{x})/\mathrm{d}t}
            {\mathrm{d}f_{\mathrm{noisy}, \boldsymbol{\theta}(t)}(\boldsymbol{x})/\mathrm{d}t}\right|
\ge 1 - \beta\,\lambda^{-m}.
\]
\end{corollary}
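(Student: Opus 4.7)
The plan is to derive Corollary~\ref{thm: thm1 holds in thm2 case} as a direct instance of Theorem~\ref{thm:main_theorem1}, once the noise attenuation produced by Theorem~\ref{thm:frequency_suppression} is repackaged as the spectrum of a positive self-adjoint ``frequency operator'' on the data torus. The starting point is the Fourier representation in Eq.~\eqref{Eq: thm2 main eq}, in which the coefficient on mode $\boldsymbol{\omega}$ is $(1-2\gamma)^{\|\boldsymbol{\omega}\|_1}\hat{f}_{\boldsymbol{\theta}}(\boldsymbol{\omega})$. Because each rotation gate in the Pauli-aligned ansatz contributes only a $0/1$ frequency per parameter, the Fourier support collapses to the finite hypercube $\{0,1\}^l$, so every sum over $\boldsymbol{\omega}$ appearing in the energy bounds is finite and enjoys uniform control.

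Following the construction indicated in the paragraph preceding the corollary, I would define $\Delta_{\boldsymbol{x}}$ by its action on the noisy Fourier basis,
\[
\Delta_{\boldsymbol{x}}\bigl[\hat{f}_{\boldsymbol{\theta}}(\boldsymbol{\omega})\,e^{i\boldsymbol{\omega}\cdot\boldsymbol{x}}\bigr] = (1-2\gamma)^{-\|\boldsymbol{\omega}\|_1/m}\,\hat{f}_{\boldsymbol{\theta}}(\boldsymbol{\omega})\,e^{i\boldsymbol{\omega}\cdot\boldsymbol{x}}.
\]
Raising it to the $m$-th power precisely cancels the noise attenuation and makes each Fourier mode an eigenvector with eigenvalue $\lambda_{\boldsymbol{\omega}} = (1-2\gamma)^{-\|\boldsymbol{\omega}\|_1/m}$. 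Positivity and self-adjointness on $L^{2}(\mathbb{T}^l)$ are immediate from this diagonal structure, and the associated spectral projector $E_{\lambda}$ retains its natural meaning as the indicator of small $\|\boldsymbol{\omega}\|_1$, that is, of low frequencies.

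I would then verify the two energy bounds demanded by Theorem~\ref{thm:main_theorem1}. Using Parseval together with the finite Fourier support, a direct computation yields
\[
\bigl|\langle f_{\mathrm{noisy},\boldsymbol{\theta}},\Delta_{\boldsymbol{x}}^{m} f_{\mathrm{noisy},\boldsymbol{\theta}}\rangle\bigr|\le 2^{l},\qquad \bigl|\langle f_{\mathrm{noisy},\boldsymbol{\theta}},\Delta_{\boldsymbol{x}}^{m}\Delta_{\boldsymbol{\theta}} f_{\mathrm{noisy},\boldsymbol{\theta}}\rangle\bigr|\le l^{2} 2^{l},
\]
so that the constants $C_{1}=2^{l}$ and $C_{2}=l^{2}2^{l}$ can be chosen uniformly in $t$, independently of the training trajectory $\boldsymbol{\theta}(t)$. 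With these inputs and the gradient-flow regularity inherited from the hypothesis, the conclusion follows from Theorem~\ref{thm:main_theorem1} essentially verbatim, with the final constant $\beta$ depending only on $C_{1}, C_{2}, \gamma, m$, and $l$.

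The main obstacle is to reconcile the noise-adapted operator $\Delta_{\boldsymbol{x}}$ with the mechanism of Theorem~\ref{thm:main_theorem1}. I would need to check that $\Delta_{\boldsymbol{x}}$ commutes with the time derivative $\partial_{t}$ (which holds here because it acts only on $\boldsymbol{x}$, while the $\boldsymbol{\theta}$ dependence is carried entirely by the Fourier coefficients), that $t\mapsto\hat{f}_{\boldsymbol{\theta}(t)}(\boldsymbol{\omega})$ is differentiable (which follows from the smoothness of the Pauli-path representation in $\boldsymbol{\theta}$), and that $f_{\mathrm{noisy},\boldsymbol{\theta}(t),\lambda}$ coincides with $E_{\lambda} f_{\mathrm{noisy},\boldsymbol{\theta}(t)}$ for the spectral projector induced by this $\Delta_{\boldsymbol{x}}$. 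Once these identifications are in place, the two energy bounds above convert directly, through the argument of Theorem~\ref{thm:main_theorem1}, into the desired ratio estimate $1-\beta\lambda^{-m}$.
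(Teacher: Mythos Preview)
Your proposal is correct and follows essentially the same route as the paper: define the noise-adapted operator $\Delta_{\boldsymbol{x}}$ diagonal in the Fourier basis with eigenvalues $(1-2\gamma)^{-\|\boldsymbol{\omega}\|_1/m}$, exploit the finite support $\{0,1\}^l$ to verify the energy bounds with $C_1=2^l$ and $C_2=l^{2}2^{l}$, and then invoke Theorem~\ref{thm:main_theorem1}. The additional regularity checks you list (commutation of $\Delta_{\boldsymbol{x}}$ with $\partial_t$, smoothness of $\hat f_{\boldsymbol{\theta}(t)}(\boldsymbol{\omega})$, identification of $f_{\mathrm{noisy},\boldsymbol{\theta}(t),\lambda}$ with $E_\lambda f_{\mathrm{noisy},\boldsymbol{\theta}(t)}$) are not spelled out in the paper but are exactly the right sanity checks to complete the reduction.
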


Corollary~\ref{thm: thm1 holds in thm2 case} indicates that the gradient of the noisy loss remains dominated by its low-frequency components, consistent with the F-principle even in the presence of these noise channels. Together, Theorem~\ref{thm:frequency_suppression} and Corollary~\ref{thm: thm1 holds in thm2 case} naturally lead to the following question: during the early training stage, if the noise in Eq.~\eqref{eq:generalized_noise_model} is retained only in the encoding layer while mitigated elsewhere, to what extent does the QNN’s performance deteriorate?

We address this question through numerical simulations. The results reveal that, in the frequency domain, QNNs retain strong robustness to encoding-layer noise when learning low-frequency features. Consequently, at early training stages, QNNs subject to encoding-gate noise modeled by Eq.~\eqref{eq:generalized_noise_model} achieve performance comparable to their noiseless counterparts.

\subsubsection{Efficient Classical Simulation via Frequency Truncation}

By extending these techniques and applying Theorem~\ref{thm:frequency_suppression}, we will demonstrate that if the PQC contains the noise model specified in Eq.~\eqref{eq:generalized_noise_model} (where we also allow additional Pauli noise within the PQC), then the circuit can be classically simulated efficiently on average. 
Note that the noise model we consider is weaker than the local depolarizing noise model, which has been widely adopted in prior works~\cite{aharonov2023polynomial,PhysRevLett.133.120603,xct1-7kf2}.
The noisy expectation value $\langle O_{\text{noisy}}(\bm\theta,\bm x)\rangle$ can be expressed as the Pauli path integral  
$\langle O_{\text{noisy}}(\bm\theta,\bm x)\rangle
      = \sum_{s\in \bm{P}^{L+1}_n}
        (1-2\gamma)^{\#\mathrm{trig}(s)}\,
        f(s,\bm{\theta})\Tr{s_0\rho }$,
where \(\#\mathrm{trig}(s)\) counts the trigonometric factors in \(f(s,\bm{\theta})\) and \(\gamma\) is the single-qubit dephasing rate.  
Because the damping factor \((1-2\gamma)^{\#\mathrm{trig}(s)}\) decays exponentially with \(\#\mathrm{trig}(s)\), high-frequency contributions contribute negligibly.  
We therefore truncate the sum to paths with at most \(\eta\) trigonometric terms, defining the low-frequency approximation
\begin{equation}\label{eq:low_frequency_approximation}
    \langle O_{\text{noisy}}\rangle^{(\eta)}
      = \!\!\!\!
        \sum_{s\in \bm{P}^{L+1}_n,\;\#\mathrm{trig}(s)\le\eta}
        \!\!\!\!
        (1-2\gamma)^{\#\mathrm{trig}(s)}
        f(s,\bm{\theta})\Tr{s_0\rho },
\end{equation}
where \(\eta\) is the frequency threshold.  

Using Pauli propagation~\cite{PhysRevLett.133.120603,lerch2024efficient,angrisani2025simulating,fontana2025classical,rudolph2025pauli}, a technique to efficiently enumerate non-zero Pauli path from low frequency to high frequency, it follows that the number of non-zero terms with \(\#\mathrm{trig}(s)\le\eta\) is \(\mathcal{O}(2^{\eta})\); evaluating each coefficient also costs \(\mathcal{O}(2^{\eta})\) operations (we show those details in Supp.~\ref{sec:theoretical_foundations_low_frequency_simulation_noisy_quantum_circuits}).

To quantify the truncation error, we bound the mean-squared deviation: $\mathbb{E}_{\bm\theta}\!\left[ \left| \langle O_{\mathrm{noisy}}\rangle - \langle O_{\mathrm{noisy}}\rangle^{(\eta)} \right|^2 \right]
    \;\le\; (1-2\gamma)^{2\eta}\,\left\|  O \right\|^{2}_2$,
where $\left\|  O \right\|_2$ is the operator norm of $O$ and the inequality uses the orthogonality of distinct Pauli paths,
$\mathbb{E}_{\bm\theta}\!\big[f(s,\bm\theta)\,f(s',\bm\theta)\big]=0$ for $s\neq s'$.
Assuming \(\|O\|_2\le C\), selecting \(\eta=\mathcal{O}\!\bigl(\ln(\varepsilon)\big/\!\ln(1-2\gamma)\bigr)\)  
guarantees  
$ \mathbb{E}_{\bm\theta}
  \bigl|
    \langle O_{\text{noisy}}\rangle
    -\langle O_{\text{noisy}}\rangle^{(\eta)}
  \bigr|^{2}
  \le\varepsilon$.
Note that for each Pauli path \(s\), the computational complexity of evaluating its coefficient \(f(s, \bm{\theta})\) is polynomial in the number of qubits \(n\) and the circuit depth \(L\).  
Therefore, assuming a constant noise rate \(\gamma\) with noise model defined in Eq.~\eqref{eq:generalized_noise_model}, the cost of computing Eq.~\eqref{eq:low_frequency_approximation} is \(\mathcal{O}(2^{\eta}) \cdot \mathrm{poly}(n, L)\),  
which implies an overall runtime that is polynomial in the circuit depth \(L\), the number of qubits \(n\), and \(1/\varepsilon\).

\begin{theorem}\label{thm: 3}
Consider a parameterized quantum circuit and noise model in Theorem~\ref{thm:frequency_suppression}. For  
\(\eta=\mathcal{O}\!\bigl(\ln(\varepsilon)\big/\!\ln(1-2\gamma)\bigr)\)  
the truncated expectation value \(\langle O_{\text{noisy}}\rangle^{(\eta)}\) satisfies  
\[
  \mathbb{E}_{\bm\theta}
  \bigl|
    \langle O_{\text{noisy}}\rangle
    -\langle O_{\text{noisy}}\rangle^{(\eta)}
  \bigr|^{2}
  \le \varepsilon .
\]
The computational cost of evaluating \(\langle O_{\text{noisy}}\rangle^{(\eta)}\) is \(\mathcal{O}(2^{\eta}) \cdot \mathrm{poly}(n, L)\). Thus, for constant \(\gamma\), the noisy circuit expectation can be efficiently simulated.
\end{theorem}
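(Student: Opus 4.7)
The plan is to combine the Pauli path integral representation of Theorem~\ref{thm:frequency_suppression} with an orthogonality argument over the parameter torus, and then to bound the enumeration cost of the surviving low-frequency paths via Pauli propagation.

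First, I would start from
\[
\langle O_{\text{noisy}}(\bm\theta,\bm x)\rangle
  = \sum_{s\in \bm{P}^{L+1}_n}(1-2\gamma)^{\#\mathrm{trig}(s)}\,f(s,\bm{\theta})\Tr{s_0 \rho},
\]
which follows from Theorem~\ref{thm:frequency_suppression} when $\bm x$ and $\bm\theta$ are jointly treated as rotation parameters, and define $\langle O_{\text{noisy}}\rangle^{(\eta)}$ as in Eq.~\eqref{eq:low_frequency_approximation}. The residual is then the tail sum over paths with $\#\mathrm{trig}(s)>\eta$, each carrying a prefactor of magnitude at most $(1-2\gamma)^{\eta+1}$, so the squared residual should inherit a uniform $(1-2\gamma)^{2\eta}$ damping.

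Second, I would exploit the fact that distinct Pauli paths produce distinct trigonometric modes in $\bm\theta$, so $\mathbb{E}_{\bm\theta}[f(s,\bm\theta)f(s',\bm\theta)]=0$ for $s\neq s'$ under the uniform measure on $\mathbb{T}^L$. The mean-squared error then collapses to a diagonal sum,
\[
\mathbb{E}_{\bm\theta}\!\left|\langle O_{\text{noisy}}\rangle-\langle O_{\text{noisy}}\rangle^{(\eta)}\right|^{2}
= \!\!\!\!\!\sum_{\#\mathrm{trig}(s)>\eta}\!\!\!\!\!(1-2\gamma)^{2\#\mathrm{trig}(s)}\,\mathbb{E}_{\bm\theta}\!\bigl[f(s,\bm\theta)^{2}\bigr]\,\Tr{s_0\rho}^{2}.
\]
Pulling $(1-2\gamma)^{2\eta}$ out as a uniform prefactor and invoking the Parseval-type inequality $\sum_{s}\mathbb{E}_{\bm\theta}[f(s,\bm\theta)^{2}]\,\Tr{s_0\rho}^{2}\le \|O\|_2^{2}$ (a consequence of the unitarity of the noiseless channel together with the orthonormality of the Pauli basis) yields the bound $(1-2\gamma)^{2\eta}\|O\|_2^{2}$. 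Setting this at most $\varepsilon$ and solving gives $\eta=\mathcal{O}\!\bigl(\ln(\varepsilon)/\ln(1-2\gamma)\bigr)$.

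Third, for the complexity I would appeal to Pauli propagation: at each layer, a trigonometric factor is produced only when $P_i$ anti-commutes with the incoming Pauli $s_{i-1}$, in which case the path branches into two strings, whereas commuting layers propagate $s_{i-1}$ deterministically through the Clifford $C_i$. Pruning any branch whose trigonometric count exceeds $\eta$ leaves at most $2^{\eta}$ active paths, with $\mathrm{poly}(n,L)$ bookkeeping per branch and $\mathrm{poly}(n,L)$ time to evaluate each coefficient. Substituting the logarithmic cutoff gives the total runtime $\mathcal{O}(2^{\eta})\cdot\mathrm{poly}(n,L)=\mathrm{poly}(n,L,1/\varepsilon)$ for constant $\gamma$. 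The main obstacle I anticipate is establishing the Parseval-type bound rigorously when additional intra-ansatz Pauli noise is permitted, since such noise breaks the simple unitary identity; a careful accounting should nonetheless show that each extra noise channel only multiplies existing path weights by factors of magnitude at most one, leaving the noiseless upper bound intact.
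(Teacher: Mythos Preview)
Your proposal is correct and follows essentially the same route as the paper: the Pauli path-integral expansion, the orthogonality $\mathbb{E}_{\bm\theta}[f(s,\bm\theta)f(s',\bm\theta)]=0$ for $s\neq s'$ to diagonalize the mean-squared error, the bound $\sum_{s}\mathbb{E}_{\bm\theta}[f(s,\bm\theta)^{2}]\Tr{s_0\rho}^{2}\le \|O\|_2^{2}$ (which the paper obtains by reassembling the diagonal sum back into $\mathbb{E}_{\bm\theta}|\langle O(\bm\theta)\rangle|^{2}\le\max_{\bm\theta}|\langle O(\bm\theta)\rangle|^{2}$), and the $2^{\eta}$ path-count via Pauli propagation. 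Your anticipated obstacle regarding additional Pauli noise is also handled exactly as you suggest: the paper notes that any extra Pauli channel only multiplies each path weight by a factor of modulus at most one, so the noiseless upper bound survives.
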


\subsection{Numerical Results}\label{sec:QNN_Robustness}
We complement the analytical results with numerical experiments that probe how noise affects the frequency-domain learning behavior of QNNs. Throughout this section we use the circuit and noise models of Section~\ref{sec:2.3}, specialized to a one-dimensional regression task with classical inputs.
The QNN output \( f_{\bm{\theta}}(\bm{x}) \) is the expectation value of a parameterized quantum circuit, with input-encoding rotations depending on \( \bm{x} \) and trainable rotations parameterized by \( \bm{\theta} \) in the ansatz layer.

\subsubsection{Training Task and Numerical Set-up}\label{sec:numerical_setting}

To illustrate the frequency-domain behavior of noisy QNNs, we consider the single-variable regression task
$ f(x)=\frac{1}{M}\sum_{k=1}^{M}\!\bigl[\sin(kx)+\cos(kx)\bigr]$, where $M=40$, defined on the interval \([0,2\pi]\).

\paragraph*{Circuit architecture.}
The QNN is implemented by a parameterized quantum circuit composed of alternating \(R_Y\) and \(\mathrm{CZ}\) layers (Fig.~\ref{fig:pic_QNN_Robustness_figa}).  
Data are encoded through \(R_Y(x)\) rotations, while trainable \(R_Y(\theta_i)\) rotations form the ansatz.  
Following the block-encoding strategy of Refs.~\cite{perez2020data,10.21468/SciPostPhysLectNotes.61,10.5555/3737916.3741059}, the data rotations are repeated to enhance non-linearity~\cite{PhysRevA.103.032430}.  
All \(R_Y\) gates are compiled into the native gate set \(\{S,H,S^{\dagger},R_Z\}\) to match superconducting hardware constraints.  
The measured observable is \(Z_1=Z\otimes I^{\otimes(n-1)}\); an overall scale factor \(a\) is introduced to match the target amplitude:
$ f_{\bm{\theta},a}(x)=a\,\bigl\langle Z_1(x,\bm{\theta})\bigr\rangle$.

\begin{figure}[htbp!]
 \centering
 \includegraphics[width = \columnwidth]{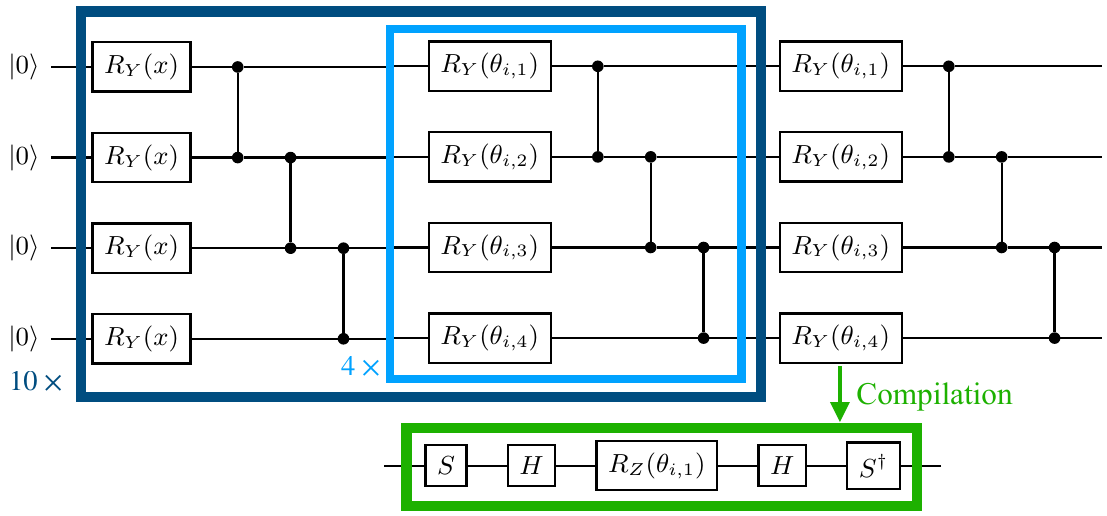}
 \caption{\justifying
The QNN ansatz consists of two types of rotation gates: $R_Y(x)$ gates encode the input data, while $R_Y(\theta_i)$ gates contain trainable weights. To enhance the nonlinear expressive power of the QNN, we apply repeated encoding of the input $x$: the circuit module within the dark blue box is repeated 10 times. Additionally, to increase the number of trainable parameters and thereby improve the model's expressive capacity, the module within the light blue box is repeated 4 times. It is important to note that all rotation gates involving trainable parameters are independently parameterized, whereas all gates involving the input $x$ share the same parameter value across repetitions.
} 
 \label{fig:pic_QNN_Robustness_figa}
\end{figure}

\paragraph*{Data set and optimization.}
A uniformly spaced training set of \(N=2048\) samples \(x_j\in[0,2\pi]\) with targets \(y_j=f(x_j)\) is employed.  
The parameters \((\bm{\theta},a)\) are learned by minimizing the mean-squared error  
$\mathcal{L}(\bm{\theta},a)=\frac{1}{N}\sum_{j=1}^{N}
        \bigl[f_{\bm{\theta},a}(x_j)-y_j\bigr]^2$,
using the Adam optimizer~\cite{kingma2014adam} with a learning rate of \(0.1\) and full-batch updates, which is feasible given the small number of trainable parameters.  
Optimization is halted once \(\mathcal{L}\) plateaus, typically after \(5\times10^{2}\) iterations.

\begin{figure*}[htpb!]
 \centering
 \includegraphics[width = 1\textwidth]{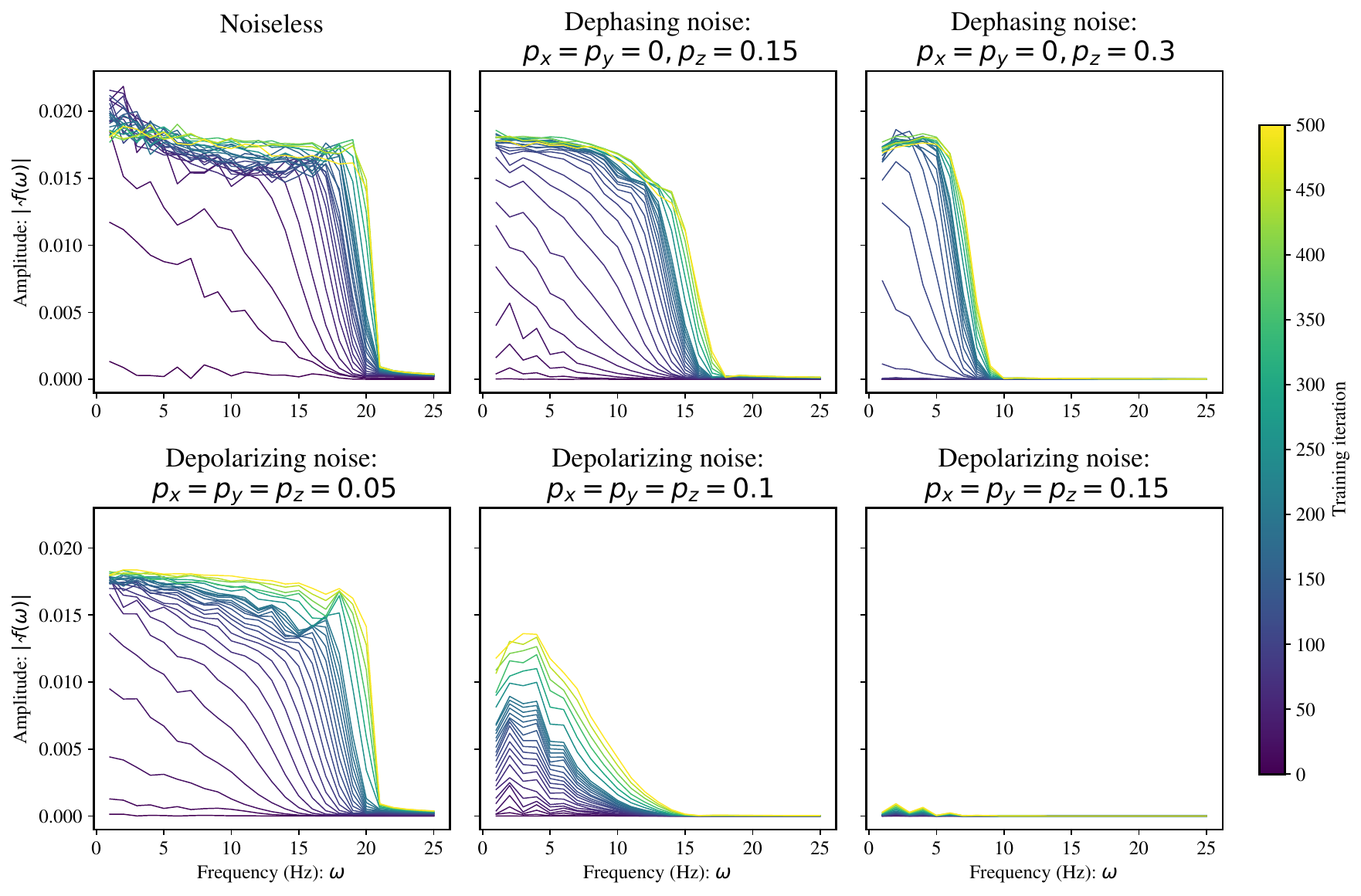}
 \caption{\justifying
 Fourier spectrum of the QNN output \(f_{\bm{\theta}}(x)\) during training under different noise models.  
        Each panel plots the amplitude \(\lvert\hat{f}_{\bm{\theta}}(\omega)\rvert\) versus frequency $\omega$; curves are colour-coded by iteration (dark \(\rightarrow\) early stage, yellow = 500, i.e.\ near convergence).  
        Low-frequency components converge after only a few updates, whereas high-frequency components require many more iterations, evidencing faster learning of low frequencies and noise-induced suppression of high frequencies.
} 
 \label{fig:pic_QNN_Robustness22_a}
\end{figure*}

\begin{figure*}[htbp!]
 \centering
 \includegraphics[width = 0.95\textwidth]{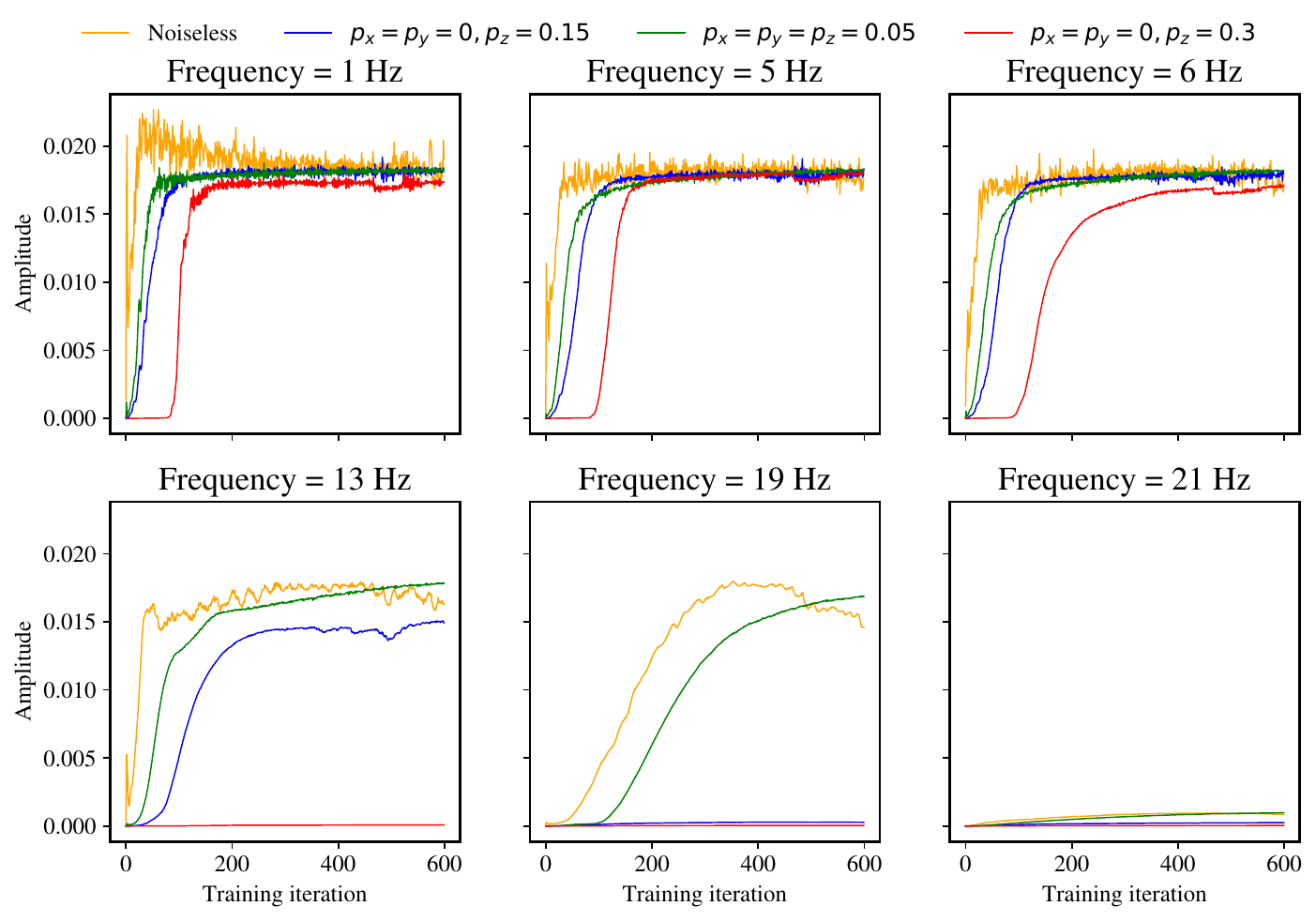}
 \caption{\justifying
Evolution of Fourier amplitudes at selected frequencies (1, 5, 6, 13, 19, and 21 Hz, respectively) during QNN training under various noise models. The curves reflect how noise strength and type affect the learnability of frequency components across training iterations.
} 
 \label{fig:pic_QNN_Robustness22_b}
\end{figure*}

\paragraph*{Noise model architecture.}
We consider noise applied only after the \(R_Z\) gates in the encoding layer, consistent with Sec.~\ref{sec:2.3}, where the rotation angle equals the input \(x\).
To study both dephasing and depolarizing noise in a unified way and to enable an easy comparison, we model them using the Pauli noise channel:
$\mathcal{E}_{\text{Pauli}}(\rho) = (1 - p_x - p_y - p_z)\rho + p_x X\rho X + p_y Y\rho Y + p_z Z\rho Z$.
When \(p_x = p_y = 0\), this reduces to dephasing noise, and when \(p_x = p_y = p_z > 0\), it corresponds to depolarizing noise. 

Furthermore, in Supp.~\ref{appendix:application1_numerical}, we present numerical simulations in a more hardware-relevant setting where noise is inserted after all quantum gates. We apply gate-specific noise: distinct types and strengths immediately after the input-encoding rotation gates, and, for all remaining gates, depolarizing noise with rate $0.001$ for single-qubit gates and $0.01$ for two-qubit gates. This scenario closely reflects the training of noisy QNNs on quantum hardware without quantum error mitigation. We examine how such noise influences the evolution of the QNN's frequency-domain behavior over the course of training. The results also show that, even in the presence of noise, QNNs can effectively learn the low-frequency components of the target function.  These findings suggest that, in certain regimes, the impact of noise on QNN training may be less severe than anticipated, and full-scale quantum error mitigation on every gate may not be strictly necessary.
\subsubsection{Numerical Simulation Results}

The numerical simulation results are presented in Fig.~\ref{fig:pic_QNN_Robustness22_a} and Fig.~\ref{fig:pic_QNN_Robustness22_b}.
Fig.~\ref{fig:pic_QNN_Robustness22_a} tracks the Fourier amplitudes  
\(|\hat{f}_{\bm{\theta}}(\omega)|\) of the QNN output \(f_{\bm{\theta}}(x)\) over \(500\) full-batch iterations  
(one epoch per update).  
Within each panel, darker curves correspond to early iterations; the yellow curve marks iteration 500.

\paragraph*{Noiseless case.}
In the top-left panel, low-frequency modes (\(\omega\!\le\!10\,\mathrm{Hz}\)) reach amplitudes \(0.015\text{-}0.020\) in under \(100\) iterations, whereas high-frequency modes converge far more slowly, in accord with Theorem~\ref{thm:main_theorem1}.

\paragraph*{Dephasing noise.}
Introducing dephasing after the encoding gates with rates  
\((p_x,p_y,p_z)=(0,0,0.15)\) and \((0,0,0.30)\) (centre and right, first row) retards convergence, most strongly at high frequencies.  
At \(\omega=20\,\mathrm{Hz}\) the final amplitude drops from \(0.0175\) (noiseless) to \(0.015\) for \(p_z=0.15\) and is fully suppressed for \(p_z=0.30\).  
These observations confirm Theorem~\ref{thm:frequency_suppression}: dephasing noise imposes an exponential attenuation on high-frequency components while leaving the low-frequency sector essentially unaffected. Notably, for frequencies $\leq 10$ Hz, the final converged amplitudes remain comparable across all three cases, suggesting that dephasing noise does not significantly hinder the QNN's ability to learn the low-frequency components of the target function $f(x)$: it only slows down the convergence.

Building on the conclusion of Theorem~\ref{thm:main_theorem1}, which demonstrates that QNNs predominantly capture low-frequency components during the initial phase of training, our results suggest that QNNs with dephasing noise after rotation gates remain effective in this early stage. Moreover, if the target function primarily consists of low-frequency modes, which is common in the task with classical data~\cite{Field:87,_1994,olshausen1996emergence,simoncelli2001natural}, or the training data contains high-frequency noise, such noisy QNNs may even outperform noiseless counterparts by selectively learning relevant features while being inherently resistant to high-frequency perturbations.

\paragraph*{Depolarizing noise.}
Next, we examine the second-row subfigures in Fig.~\ref{fig:pic_QNN_Robustness22_a}, which depict results under depolarizing noise. Similar to dephasing noise, the presence of depolarizing noise slows down convergence. However, unlike dephasing, the convergence degradation appears approximately uniform across the frequency spectrum. This observation is consistent with prior studies~\cite{aharonov2023polynomial,PhysRevLett.133.120603,fontana2025classical}. 
These works indicate that, under depolarizing noise, suppression depends on the Hamming weight rather than on the frequency.

Fig.~\ref{fig:pic_QNN_Robustness22_b} presents the convergence of amplitude for the selected frequencies under various noise models and noise levels. Each curve corresponds to a specific noise condition, with amplitude plotted as a function of training iteration.
At 1 Hz and 5 Hz (the first two subfigures), the amplitudes converge to approximately 0.02 regardless of noise, indicating that QNNs can successfully learn low-frequency components even under noise. For dephasing noise with rate $p_z = 0.3$, we observe a marked slowdown in convergence as frequency increases from 1 Hz to 6 Hz. At 6 Hz, the amplitude in the noisy case is already lower than in the noiseless case, and by 13 Hz, the amplitude remains at zero throughout training, indicating the QNN fails to learn this component. The same holds for 19 Hz.
For dephasing noise with $p_z = 0.15$, we observe similar behavior: slower convergence at 13 Hz with reduced amplitude, and complete suppression at 19 Hz. These results underscore the exponential suppression of higher-frequency components under dephasing noise, as stated in Lemma~\ref{lemma:frequency_suppression}. Moreover, the larger the noise rate, the more significant the suppression effect.
To more comprehensively illustrate the convergence behavior of different frequency components, we present additional plots for each frequency from 1~Hz to 21~Hz in the Supp.~\ref{appendix:application1_numerical} Fig.~\ref{fig:whole_picture_MPO_QEMappendix}.

\section{Conclusion and Outlook}
\label{sec:Discussion}

In this work, we have developed a unified frequency-domain framework for analyzing the training dynamics of both classical and quantum neural networks. On the theoretical side, we proved a general F-principle theorem on smooth data manifolds, establishing that under gradient-flow dynamics the low-frequency component of the training error decays strictly faster than the total error. This result provides a rigorous foundation for widely observed empirical phenomena in classical deep learning and quantum learning, and bridges learning on Euclidean data spaces with learning on quantum state manifolds.

We further incorporated realistic noise effects by analyzing parameterized quantum circuits subject to axis-aligned single-term Pauli channels and depolarizing noise. Using a Pauli path-integral representation, we demonstrated that these noise channels act as spectral filters, exponentially suppressing high-frequency Fourier modes while largely preserving low-frequency structure. Within this noisy setting, we showed that the gradient of the low-frequency loss continues to dominate that of the full loss, confirming that the unified F-principle remains operative beyond the idealized noiseless regime. Moreover, the same spectral structure enables a frequency-truncation scheme that approximates noisy expectation values with a polynomial-time classical algorithm and controlled average-case error, thereby directly linking noise-induced spectral filtering to classical simulability.

Taken together, our results indicate that the F-principle provides a unifying and predictive perspective on how both classical and quantum neural networks learn in noiseless and noisy environments. This frequency-based viewpoint suggests several promising directions for future research, including the design of ansatz circuits and data-encoding strategies tailored to specific frequency bands, a deeper exploration of the interplay between spectral bias, generalization, and quantum advantage, and extensions of our analysis beyond idealized gradient flow to more realistic optimization dynamics and hardware-level noise models. More broadly, clarifying the respective roles of low- and high-frequency components in learnability and simulability may offer valuable guidance for identifying learning tasks well suited to near-term quantum devices, and for delineating the regimes in which genuine quantum advantage can be expected.

\begin{acknowledgments}

We thank Guoliang Yu and Xiaodie Lin for valuable discussions. Z.L. and Z.W. were supported by the Beijing Natural Science Foundation Key Program (Grant No.~Z220002).
Z.L. was supported by NKPs (Grant No.~2020YFA0713000). R.L. and Z.L. were supported by BMSTC and ACZSP (Grant No.~Z221100002722017). 
R.Z. and Z.W. were supported by the National Natural Science Foundation of China (Grant Nos.~62272259 and 62332009).
W.L. and D.-L.D. are supported by the National Natural Science Foundation of China (Grant No.~T2225008), the Quantum Science and Technology-National Science and Technology Major Project (Grant No.~2021ZD0302203), the Tsinghua University Dushi Program, and the Shanghai Qi Zhi Institute Innovation Program SQZ202318.

\end{acknowledgments}

\bibliography{ref}

\clearpage
\widetext

\appendix
\section*{Supplementary Material}
\renewcommand{\thesection}{\Roman{section}}
\renewcommand{\appendixname}{Supplementary Material}

% ****** Begin of Appendix ******

\section{Rigorous Statement for Theorem 1}\label{sec:append:new_statement}

Let \((\mathcal{M},g)\) be a connected \(d\)-dimensional Riemannian manifold with volume element \(\rm d\boldsymbol{x}\).
All functions appearing in what follows are taken in the Hilbert space \(L^{2}(\mathcal{M})\) and are assumed to possess \(m\) square-integrable weak derivatives for some fixed integer \(m\ge 1\); that is, they lie in the Sobolev space \(H^{m}(\mathcal{M})\).
(The precise value of \(m\) and the accompanying regularity bounds are spelled out later in Assumption~\ref{assumption1}.)
Denote by \(\Delta_{\boldsymbol{x}}\) a positive operator on \(\mathcal{M}\), realized as a self-adjoint operator on \(L^{2}(\mathcal{M})\).
By the spectral theorem\cite{Hall2013QuantumTheory}, \(\Delta_{\boldsymbol{x}}\) is associated with a projection-valued measure \(\mu\); for a Borel set \(I\subset\mathbb{R}_{\ge 0}\) we write \(E_{I}:=\mu(I)\), and in particular \(E_{\lambda}:=E_{[0,\lambda]}\).

The training error \(g_{\boldsymbol{\theta}}=f_{\boldsymbol{\theta}}-y\) admits the spectral decomposition  
\[
  g_{\boldsymbol{\theta}}(\boldsymbol{x})
      \;=\;
      \int_{\xi\ge 0} \rm d\mu(\xi)\,g_{\boldsymbol{\theta}}(\boldsymbol{x})
      \;=\;
      \int_{\xi\ge 0} \rm dg_{\boldsymbol{\theta}}(\xi,\boldsymbol{x}),
\]
with
\(
  \Delta_{\boldsymbol{x}}g_{\boldsymbol{\theta}}(\boldsymbol{x})
    =\int_{\xi\ge 0}\xi\,dg_{\boldsymbol{\theta}}(\xi,\boldsymbol{x}).
\)

Define the full loss and its low-frequency truncation by  
\begin{align}
    \mathcal{L}(\boldsymbol{\theta})& := \int_\mathcal{M}\|f_{\boldsymbol{\theta}}(\boldsymbol{x})-y(\boldsymbol{x})\|^2 \rm d\boldsymbol{x} =\langle g_{\boldsymbol{\theta}}(\boldsymbol{x}),g_{\boldsymbol{\theta}}(\boldsymbol{x})\rangle_{L^{2}(\mathcal{M})} = \langle g_{\boldsymbol{\theta}}(\boldsymbol{x}),\int_\xi ~d\mu(\xi)g_{\boldsymbol{\theta}}(\boldsymbol{x})\rangle_{L^{2}(\mathcal{M})}\\ &=\int_\xi \rm ~d\langle g_{\boldsymbol{\theta}}(\boldsymbol{x}), \mu(\xi)g_{\boldsymbol{\theta}}(\boldsymbol{x})\rangle_{L^{2}(\mathcal{M})}= \int_\xi \rm ~d\langle \mu(\xi)g_{\boldsymbol{\theta}}(\boldsymbol{x}), \mu(\xi)g_{\boldsymbol{\theta}}(\boldsymbol{x})\rangle_{L^{2}(\mathcal{M})}=\int_\xi \rm ~d\langle g_{\boldsymbol{\theta}}(\xi, \boldsymbol{x}), g_{\boldsymbol{\theta}}(\xi, \boldsymbol{x})\rangle_{L^{2}(\mathcal{M})} \;,
\end{align}

and similarly
$$
\mathcal{L}_{\lambda}(\boldsymbol{\theta}) = \langle g_{\boldsymbol{\theta}},E_{\lambda} g_{\boldsymbol{\theta}}\rangle _{L^{2}(\mathcal{M})}=\int_{\xi\in[0, \lambda]} \rm ~d\langle  g_{\boldsymbol{\theta}}(\xi, \cdot), g_{\boldsymbol{\theta}}(\xi, \cdot)\rangle_{L^{2}(\mathcal{M})} \;.
$$
Suppose the training dynamics of $\boldsymbol{\theta}$ are 

\begin{equation}\label{dynamics}
    \left\{\begin{array}{l}
\frac{\mathrm{d} \boldsymbol{\theta}}{\mathrm{~d} t}=-\nabla_{\boldsymbol{\theta}}\mathcal{L}(\boldsymbol{\theta}) \\
\boldsymbol{\theta}(0)=\boldsymbol{\theta}_0
\end{array}\right.
\end{equation}

\begin{definition}[Sobolev space \(H^{m}(\mathcal{M})\)]
Let \((\mathcal{M},g)\) be a \(d\)-dimensional Riemannian manifold with Levi-Civita connection \(\nabla\) and volume element \(d\boldsymbol{x}\).
For an integer \(m\ge 0\), the Sobolev space \(H^{m}(\mathcal{M})\) is defined as
\[
    H^{m}(\mathcal{M})
      \;:=\;
      \Bigl\{
          f\in L^{2}(\mathcal{M})
          \;\Big|\;
          \nabla^{\alpha} f\in L^{2}(\mathcal{M})
          \text{ for every multi-index } \alpha
          \text{ with }|\alpha|\le m
      \Bigr\},
\]
where \(\nabla^{\alpha}\) denotes the weak (distributional) covariant derivative of order \(|\alpha|\).
Equipped with the norm
\[
    \|f\|_{H^{m}(\mathcal{M})}^{2}
      \;:=\;
      \sum_{|\alpha|\le m}
      \int_{\mathcal{M}}
          \bigl|\nabla^{\alpha}f(\boldsymbol{x})\bigr|^{2}\,d\boldsymbol{x},
\]
it is easy to show that \(H^{m}(\mathcal{M})\) is a Hilbert space.
\end{definition}

\begin{assumption}\label{assumption1}\leavevmode
\begin{enumerate}[label=(A\arabic*)]

\item\label{A1}
 {Manifold.}
\(\mathcal{M}\) is a smooth \(d\)-dimensional Riemannian manifold endowed with the volume measure \(d\boldsymbol{x}\).

\item\label{A2}
 {Sobolev regularity.}
For some integer \(m\ge 1\) the error \(g_{\boldsymbol{\theta}}=f_{\boldsymbol{\theta}}-y\) belongs to \(H^{m}(\mathcal{M})\) and
\(
  \Delta_{\boldsymbol{x}}^{m} g_{\boldsymbol{\theta}},\;
  \Delta_{\boldsymbol{x}}^{m}\Delta_{\boldsymbol{\theta}} g_{\boldsymbol{\theta}}
  \in L^{2}(\mathcal{M})
\)
for all \(\boldsymbol{\theta}\) along the training trajectory.

\item\label{A3}
 {Energy bounds.}  %
There exist constants \(C_{1},C_{2}>0\), independent of $m$, such that for all \(\boldsymbol{\theta}(t)\)
\begin{equation}\label{eq: C_1}
    \left|\left\langle\Delta_{\boldsymbol{x}}^m\bigl(g_{\boldsymbol{\theta}}( \boldsymbol{x})\bigr),g_{\boldsymbol{\theta}}(\boldsymbol{x})\right\rangle_{L^{2}(\mathcal{M})}\;\right| = \left|\int_\xi \xi^m~\rm d\left\langle g_{\boldsymbol{\theta}}(\xi, \boldsymbol{x}),g_{\boldsymbol{\theta}}(\xi, \boldsymbol{x})\right\rangle_{L^{2}(\mathcal{M})}\;\right| \le C_1
\end{equation}
 and
 \begin{equation}\label{eq: C_2}
\left|\left\langle\Delta_{\boldsymbol{x}}^m\Delta_{\boldsymbol{\theta}}\bigl(g_{\boldsymbol{\theta}}( \boldsymbol{x})\bigr),g_{\boldsymbol{\theta}}(\boldsymbol{x})\right\rangle_{L^{2}(\mathcal{M})}\;\right| = \left|\int_\xi \xi^m~\rm d\left\langle \Delta_{\boldsymbol{\theta}} g_{\boldsymbol{\theta}}(\xi, \boldsymbol{x}),g_{\boldsymbol{\theta}}(\xi, \boldsymbol{x})\right\rangle_{L^{2}(\mathcal{M})}\;\right| \le C_2.
\end{equation}

\item\label{A4}
 {Gradient regularity.}  %
The map \(t\mapsto\nabla_{\boldsymbol{\theta}}\mathcal{L}(\boldsymbol{\theta}(t))\) is continuous on \([0,T]\), and
\(\nabla_{\boldsymbol{\theta}}\mathcal{L}\) is Lipschitz in \(\boldsymbol{\theta}\) on a full-measure subset of parameter space.

\item\label{A5}
 {Non-degenerate training point.}
For some \(T>0\), \(\bigl|\nabla_{\boldsymbol{\theta}}\mathcal{L}(\boldsymbol{\theta}(T))\bigr|>0\).

\end{enumerate}
\end{assumption}

\begin{theorem}[Rigorous unified F-principle]\label{thm:rigorous}
Under Assumptions~\ref{A1}-\ref{A5} there exists a constant \(\beta>0\) such that, for every \(t\in[0,T]\),
    $$
\left|1-\frac{\rm d\mathcal{L}_\lambda(\boldsymbol{\theta}(t))/\rm dt}{\rm d\mathcal{L}(\boldsymbol{\theta}(t))/\rm dt}\right|\leq \beta\lambda^{-m} \quad \text{and}\quad \left|\frac{\mathrm{d}\mathcal{L}_\lambda(\boldsymbol{\theta}(t))/\mathrm{d}t}{\mathrm{d}\mathcal{L}(\boldsymbol{\theta}(t))/\mathrm{d}t}\right|\geq 1- \beta\lambda^{-m}
$$
\end{theorem}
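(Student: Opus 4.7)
The plan is to rewrite the theorem as a bound on the high-frequency derivative. Since $E_{\lambda}$ is a spectral projector, the decomposition $\mathcal{L}=\mathcal{L}_{\lambda}+\mathcal{L}_{>\lambda}$ with $\mathcal{L}_{>\lambda}(\boldsymbol{\theta}):=\langle g_{\boldsymbol{\theta}},(\mathbb{I}-E_{\lambda})g_{\boldsymbol{\theta}}\rangle_{L^{2}(\mathcal{M})}$ is additive, so $1-\bigl(d\mathcal{L}_{\lambda}/dt\bigr)/\bigl(d\mathcal{L}/dt\bigr)=\bigl(d\mathcal{L}_{>\lambda}/dt\bigr)/\bigl(d\mathcal{L}/dt\bigr)$, and the second conclusion of the theorem follows from the first by the reverse triangle inequality. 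Thus everything reduces to proving $\bigl|d\mathcal{L}_{>\lambda}/dt\bigr|\le\beta\lambda^{-m}\bigl|d\mathcal{L}/dt\bigr|$. I would begin by differentiating under the inner product, which is legitimate by the Sobolev regularity (A2) and the gradient-Lipschitz condition (A4), to obtain $d\mathcal{L}/dt=-\|\nabla_{\boldsymbol{\theta}}\mathcal{L}\|^{2}=-\|\dot{\boldsymbol{\theta}}\|^{2}$ and $d\mathcal{L}_{>\lambda}/dt=2\sum_{k}\dot{\theta}_{k}\,\langle(\mathbb{I}-E_{\lambda})g_{\boldsymbol{\theta}},\partial_{\theta_{k}}g_{\boldsymbol{\theta}}\rangle_{L^{2}(\mathcal{M})}$, using the self-adjointness of $E_{\lambda}$ together with $\partial_{t}g_{\boldsymbol{\theta}}=\sum_{k}\dot{\theta}_{k}\,\partial_{\theta_{k}}g_{\boldsymbol{\theta}}$.

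The next step is the spectral-truncation inequality. For $\xi\ge\lambda$ one has $1\le(\xi/\lambda)^{m}$, so by the spectral theorem $\mathbb{I}-E_{\lambda}\preceq\lambda^{-m}\Delta_{\boldsymbol{x}}^{m}$ on the range of $\mathbb{I}-E_{\lambda}$. Applied to any $h\in H^{m}(\mathcal{M})$ this yields $\|(\mathbb{I}-E_{\lambda})h\|^{2}\le\lambda^{-m}\,\langle h,\Delta_{\boldsymbol{x}}^{m}h\rangle_{L^{2}(\mathcal{M})}$. Choosing $h=g_{\boldsymbol{\theta}}$ and invoking the first bound of (A3) gives $\|(\mathbb{I}-E_{\lambda})g_{\boldsymbol{\theta}}\|^{2}\le C_{1}\lambda^{-m}$; choosing $h=\partial_{\theta_{k}}g_{\boldsymbol{\theta}}$ and summing over $k$ gives $\sum_{k}\|(\mathbb{I}-E_{\lambda})\partial_{\theta_{k}}g_{\boldsymbol{\theta}}\|^{2}\le C_{2}\lambda^{-m}$, where the interpretation $\Delta_{\boldsymbol{\theta}}=-\sum_{k}\partial_{\theta_{k}}^{2}$ is used to identify $\langle g_{\boldsymbol{\theta}},\Delta_{\boldsymbol{x}}^{m}\Delta_{\boldsymbol{\theta}}g_{\boldsymbol{\theta}}\rangle$ with $\sum_{k}\langle\partial_{\theta_{k}}g_{\boldsymbol{\theta}},\Delta_{\boldsymbol{x}}^{m}\partial_{\theta_{k}}g_{\boldsymbol{\theta}}\rangle$ after the self-adjointness manipulation separating the $\boldsymbol{x}$ and $\boldsymbol{\theta}$ variables.

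Two nested Cauchy-Schwarz estimates then close the numerator. For each index $k$, $\bigl|\langle(\mathbb{I}-E_{\lambda})g_{\boldsymbol{\theta}},\partial_{\theta_{k}}g_{\boldsymbol{\theta}}\rangle\bigr|=\bigl|\langle(\mathbb{I}-E_{\lambda})g_{\boldsymbol{\theta}},(\mathbb{I}-E_{\lambda})\partial_{\theta_{k}}g_{\boldsymbol{\theta}}\rangle\bigr|\le\|(\mathbb{I}-E_{\lambda})g_{\boldsymbol{\theta}}\|\,\|(\mathbb{I}-E_{\lambda})\partial_{\theta_{k}}g_{\boldsymbol{\theta}}\|$, and a further Cauchy-Schwarz in $k$ produces $\bigl|d\mathcal{L}_{>\lambda}/dt\bigr|\le 2\|\dot{\boldsymbol{\theta}}\|\,\|(\mathbb{I}-E_{\lambda})g_{\boldsymbol{\theta}}\|\,\bigl(\sum_{k}\|(\mathbb{I}-E_{\lambda})\partial_{\theta_{k}}g_{\boldsymbol{\theta}}\|^{2}\bigr)^{1/2}\le 2\sqrt{C_{1}C_{2}}\,\lambda^{-m}\,\|\dot{\boldsymbol{\theta}}\|$. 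Dividing by $|d\mathcal{L}/dt|=\|\dot{\boldsymbol{\theta}}\|^{2}$ reduces the whole estimate to a uniform lower bound on $\|\dot{\boldsymbol{\theta}}(t)\|$ throughout $[0,T]$.

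For this last ingredient I would combine (A4) and (A5) with a standard gradient-flow argument: if $\nabla_{\boldsymbol{\theta}}\mathcal{L}(\boldsymbol{\theta}(t^{\star}))=0$ for some $t^{\star}<T$ then the unique solution of (\ref{dynamics}) is stationary for all $t\ge t^{\star}$, contradicting $\|\nabla_{\boldsymbol{\theta}}\mathcal{L}(\boldsymbol{\theta}(T))\|>0$; continuity on the compact interval $[0,T]$ then delivers $\delta:=\min_{[0,T]}\|\nabla_{\boldsymbol{\theta}}\mathcal{L}(\boldsymbol{\theta}(t))\|>0$, and setting $\beta:=2\sqrt{C_{1}C_{2}}/\delta$ completes the proof. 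The main obstacle, in my view, is the proper operator-theoretic reading of the mixed bound in (A3): one must verify that the given quantity $\langle g_{\boldsymbol{\theta}},\Delta_{\boldsymbol{x}}^{m}\Delta_{\boldsymbol{\theta}}g_{\boldsymbol{\theta}}\rangle$ really controls $\sum_{k}\langle\partial_{\theta_{k}}g_{\boldsymbol{\theta}},\Delta_{\boldsymbol{x}}^{m}\partial_{\theta_{k}}g_{\boldsymbol{\theta}}\rangle$, and to justify differentiation under the spectral integral so that all manipulations with $H^{m}(\mathcal{M})$ and $E_{\lambda}$ remain valid along the trajectory; the remaining steps are elementary spectral calculus and Cauchy-Schwarz.
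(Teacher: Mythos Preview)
Your proposal is correct and follows essentially the same route as the paper: reduce to bounding $|d\mathcal{L}_{>\lambda}/dt|$, use the gradient-flow identity $d\mathcal{L}/dt=-\|\nabla_{\boldsymbol{\theta}}\mathcal{L}\|^{2}$, extract the factor $\lambda^{-m}$ via the spectral inequality $\mathbb{I}-E_{\lambda}\preceq\lambda^{-m}\Delta_{\boldsymbol{x}}^{m}$, apply Cauchy--Schwarz against the bounds $C_{1},C_{2}$ of (A3), and finish with the ODE-uniqueness argument that forces $\inf_{[0,T]}\|\nabla_{\boldsymbol{\theta}}\mathcal{L}\|>0$. The only notable difference is bookkeeping: the paper splits the derivative via the product rule and then combines AM--GM with a Cauchy--Schwarz in $\mathbb{R}^{N}$, arriving at $\beta=\tfrac{\sqrt{N}(C_{1}+C_{2})}{2}\big/\inf\|\nabla_{\boldsymbol{\theta}}\mathcal{L}\|$, whereas your two nested Cauchy--Schwarz estimates give the dimension-free (and by AM--GM slightly sharper) constant $\beta=2\sqrt{C_{1}C_{2}}\big/\inf\|\nabla_{\boldsymbol{\theta}}\mathcal{L}\|$; the concern you flag about reading $\langle g_{\boldsymbol{\theta}},\Delta_{\boldsymbol{x}}^{m}\Delta_{\boldsymbol{\theta}}g_{\boldsymbol{\theta}}\rangle$ as $\sum_{k}\langle\partial_{\theta_{k}}g_{\boldsymbol{\theta}},\Delta_{\boldsymbol{x}}^{m}\partial_{\theta_{k}}g_{\boldsymbol{\theta}}\rangle$ is exactly the identification the paper uses without comment.
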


\begin{proof}
The high-frequency part of dynamics for the loss function is:
$$
\begin{aligned}
\frac{\mathrm{d} \mathcal{L}(\boldsymbol{\theta})}{\mathrm{d} t}-\frac{\rm d\mathcal{L}_\lambda(\boldsymbol{\theta}(t))}{\rm dt} 
& =\left(\int_{\xi\in [\lambda, \infty]}\rm ~d\nabla_{\boldsymbol{\theta}}\langle  g_{\boldsymbol{\theta}}(\xi, \cdot), g_{\boldsymbol{\theta}}(\xi, \cdot)\rangle_{L^{2}(\mathcal{M})}\right)\cdot \frac{\rm d\boldsymbol{\theta}}{\rm dt} \\
& =-\left(\int_{\xi\in [\lambda, \infty]}\rm ~d\nabla_{\boldsymbol{\theta}}\langle  g_{\boldsymbol{\theta}}(\xi, \cdot), g_{\boldsymbol{\theta}}(\xi, \cdot)\rangle_{L^{2}(\mathcal{M})}\right) \cdot \nabla_{\boldsymbol{\theta}} \mathcal{L}(\boldsymbol{\theta})
\end{aligned}
$$

The dynamics for the total loss function are
$$
\frac{\mathrm{d} \mathcal{L}(\boldsymbol{\theta})}{\mathrm{d} t}=-\left|\nabla_{\boldsymbol{\theta}} \mathcal{L}(\boldsymbol{\theta})\right|^2
$$

Therefore
$$
\begin{aligned}
\left|1-\frac{\rm d\mathcal{L}_\lambda(\boldsymbol{\theta}(t))/\rm dt}{\rm d\mathcal{L}(\boldsymbol{\theta}(t))/\rm dt}\right| & =\frac{\left|\rm d\mathcal{L}(\boldsymbol{\theta}(t))/\rm dt -\mathrm{d} \mathcal{L}_\lambda(\boldsymbol{\theta}(t)) / \mathrm{d} t\right|}{\left|\mathrm{d} \mathcal{L} / \mathrm{d} t\right|} \\
& \leq 
\frac{\left| \int_{\xi\in [\lambda, \infty]}\rm ~d\nabla_{\boldsymbol{\theta}}\langle  g_{\boldsymbol{\theta}}(\xi, \cdot), g_{\boldsymbol{\theta}}(\xi, \cdot)\rangle_{L^{2}(\mathcal{M})}\right|\cdot\left|\nabla_{\boldsymbol{\theta}} \mathcal{L}(\boldsymbol{\theta})\right|}{\left|\nabla_{\boldsymbol{\theta}} \mathcal{L}(\boldsymbol{\theta})\right|^2} \\
& \leq
\frac{\int_{\xi\in [\lambda, \infty]}\rm ~d\left|\nabla_{\boldsymbol{\theta}}\langle  g_{\boldsymbol{\theta}}(\xi, \cdot), g_{\boldsymbol{\theta}}(\xi, \cdot)\rangle_{L^{2}(\mathcal{M})}\right|}{\left|\nabla_{\boldsymbol{\theta}} \mathcal{L}(\boldsymbol{\theta})\right|}.
\end{aligned}
$$

Notice that $\lambda\leq \xi$ for all $\xi\in  [\lambda, \infty]$, therefore
$$
\begin{aligned}
\int_{\xi\in [\lambda, \infty]}\rm ~d\left|\nabla_{\boldsymbol{\theta}}\langle  g_{\boldsymbol{\theta}}(\xi, \cdot), g_{\boldsymbol{\theta}}(\xi, \cdot)\rangle_{L^{2}(\mathcal{M})}\right| & \leq \lambda^{-m} \int_{\xi\in [\lambda, \infty]} \rm ~d\left|\nabla_{\boldsymbol{\theta}}\langle \xi^m g_{\boldsymbol{\theta}}(\xi, \cdot), g_{\boldsymbol{\theta}}(\xi, \cdot)\rangle_{L^{2}(\mathcal{M})} \right|
\end{aligned}
$$
 Since 
\begin{align}
    & \left|\nabla_{\boldsymbol{\theta}}\langle \xi^m g_{\boldsymbol{\theta}}(\xi, \cdot), g_{\boldsymbol{\theta}}(\xi, \cdot)\rangle_{L^{2}(\mathcal{M})} \right|\\
    = & \left| \xi^m \overline{g_{\boldsymbol{\theta}}(\xi, \cdot)} \nabla_{\boldsymbol{\theta}} g_{\boldsymbol{\theta}}(\xi, \cdot)+ \xi^m  g_{\boldsymbol{\theta}}(\xi, \cdot)\nabla_{\boldsymbol{\theta}} \overline{g_{\boldsymbol{\theta}}(\xi, \cdot)} \right|\\
    \leq & 
 \left| \xi^m \overline{g_{\boldsymbol{\theta}}(\xi, \cdot)} \nabla_{\boldsymbol{\theta}} g_{\boldsymbol{\theta}}(\xi, \cdot)\right| + \left| \xi^m  g_{\boldsymbol{\theta}}(\xi, \cdot)\nabla_{\boldsymbol{\theta}} \overline{g_{\boldsymbol{\theta}}(\xi, \cdot)} \right|,
\end{align}
it suffices to prove $\int_{\xi\in [\lambda, \infty]} \rm ~d \left|\langle \xi^m \overline{g_{\boldsymbol{\theta}}(\xi, \cdot)} \nabla_{\boldsymbol{\theta}} g_{\boldsymbol{\theta}}(\xi, \cdot)\right|<\infty$ and $\int_{\xi\in [\lambda, \infty]} \rm ~d\left| \xi^m  g_{\boldsymbol{\theta}}(\xi, \cdot)\nabla_{\boldsymbol{\theta}} \overline{g_{\boldsymbol{\theta}}(\xi, \cdot)} \right|<\infty$ respectively. 

Note that $\operatorname{Spec}(\Delta) \subseteq [0, \infty)$. Let $N$ denote the dimension of the parameter vector $\boldsymbol{\theta}$. Then we have:
\begin{align}
&
\int_{\xi\in [\lambda, \infty]} \rm ~d\left|\xi^m \overline{g_{\boldsymbol{\theta}}(\xi, \cdot)} \nabla_{\boldsymbol{\theta}} g_{\boldsymbol{\theta}}(\xi, \cdot)\right|
\\\leq &
\int_{\xi\in [\lambda, \infty]} \rm ~d\sqrt{\langle \xi^\frac{m}{2} \nabla_{\boldsymbol{\theta}} g_{\boldsymbol{\theta}}(\xi, \cdot), \xi^\frac{m}{2} \nabla_{\boldsymbol{\theta}} g_{\boldsymbol{\theta}}(\xi, \cdot)\rangle_{L^{2}(\mathcal{M})}} \cdot\sqrt{N\langle \xi^\frac{m}{2}  g_{\boldsymbol{\theta}}(\xi, \cdot), \xi^\frac{m}{2}  g_{\boldsymbol{\theta}}(\xi, \cdot)\rangle_{L^{2}(\mathcal{M})}} \\
 =&\sqrt{N}\int_{\xi\in [\lambda, \infty]} \rm ~d\sqrt{\langle \xi^{m} \Delta_{\boldsymbol{\theta}} g_{\boldsymbol{\theta}}(\xi, \cdot),  g_{\boldsymbol{\theta}}(\xi, \cdot)\rangle_{L^{2}(\mathcal{M})}}\cdot\sqrt{\langle \xi^{m}  g_{\boldsymbol{\theta}}(\xi, \cdot),  g_{\boldsymbol{\theta}}(\xi, \cdot)\rangle_{L^{2}(\mathcal{M})}}\\
  \leq &\frac{\sqrt{N}}{2}\int_{\xi} \rm ~d\left( \langle \xi^{m} \Delta_{\boldsymbol{\theta}} g_{\boldsymbol{\theta}}(\xi, \cdot),  g_{\boldsymbol{\theta}}(\xi, \cdot)\rangle_{L^{2}(\mathcal{M})}+\langle \xi^{m}  g_{\boldsymbol{\theta}}(\xi, \cdot),  g_{\boldsymbol{\theta}}(\xi, \cdot)\rangle_{L^{2}(\mathcal{M})}\right)\\
 = &\frac{\sqrt{N}}{2}\int_{\xi}\xi^{m} \rm ~d\left( \langle \Delta_{\boldsymbol{\theta}} g_{\boldsymbol{\theta}}(\xi, \cdot),  g_{\boldsymbol{\theta}}(\xi, \cdot)\rangle_{L^{2}(\mathcal{M})}+\langle g_{\boldsymbol{\theta}}(\xi, \cdot),  g_{\boldsymbol{\theta}}(\xi, \cdot)\rangle_{L^{2}(\mathcal{M})}\right)\\
\leq & \frac{\sqrt{N}}{2}\left(\left|\int_\xi \xi^m~\rm d\left\langle g_{\boldsymbol{\theta}}(\xi, \boldsymbol{x}),g_{\boldsymbol{\theta}}(\xi, \boldsymbol{x})\right\rangle_{L^{2}(\mathcal{M})}\;\right| + \left|\int_\xi \xi^m~\rm d\left\langle \Delta_{\boldsymbol{\theta}} g_{\boldsymbol{\theta}}(\xi, \boldsymbol{x}),g_{\boldsymbol{\theta}}(\xi, \boldsymbol{x})\right\rangle_{L^{2}(\mathcal{M})}\;\right|\right)\\
= & \frac{\sqrt{N}}{2}(C_1+C_2),
 \end{align}
and we can prove that $\int_{\xi\in [\lambda, \infty]} \rm ~d\left|\langle \xi^m  g_{\boldsymbol{\theta}}(\xi, \cdot),\nabla_{\boldsymbol{\theta}} g_{\boldsymbol{\theta}}(\xi, \cdot)\rangle_{L^{2}(\mathcal{M})} \right|<\infty$ by the same argument.

 By assumption~\ref{assumption1}(A4),  $\nabla_{\boldsymbol{\theta}} \mathcal{L}(\boldsymbol{\theta}(t))$ is continuous on $t \in[0, T]$. If $\inf _{t \in(0, T]}\left|\nabla_{\boldsymbol{\theta}} \mathcal{L}(\boldsymbol{\theta}(t))\right|=0$, then there is a $t_0 \in[0, T]$ such that $\left|\nabla_{\boldsymbol{\theta}} \mathcal{L}\left(\boldsymbol{\theta}\left(t_0\right)\right)\right|=0$. By the uniqueness of the ordinary differential equation~\ref{dynamics}, $\boldsymbol{\theta}(t)\equiv \boldsymbol{\theta}_0$ is the unique solution, thus we have $\left|\nabla_{\boldsymbol{\theta}} \mathcal{L}(\boldsymbol{\theta}(T))\right|= \left|\nabla_{\boldsymbol{\theta}} \mathcal{L}(\boldsymbol{\theta}_0)\right|=\left|\nabla_{\boldsymbol{\theta}} \mathcal{L}(\boldsymbol{\theta}(t_0))\right|=0$ which contradicts the assumption that $\left|\nabla_{\boldsymbol{\theta}} \mathcal{L}(\boldsymbol{\theta}(T))\right|>0$. 
Hence $\inf _{t \in(0, T]}\left|\nabla_{\boldsymbol{\theta}} \mathcal{L}(\boldsymbol{\theta}(t))\right|>0$. Thus, the constant $\beta:= \frac{\sqrt{N}(C_1 + C_2)}{2\inf_{t \in (0,T]}\left|\nabla_{\boldsymbol{\theta}}\mathcal{L}(\boldsymbol{\theta})\right|}$ satisfies $0 < \beta < +\infty$.
Therefore, we have proved 
\begin{align*}
    \left|1-\frac{\rm d\mathcal{L}_\lambda(\boldsymbol{\theta}(t))/\rm dt}{\rm d\mathcal{L}(\boldsymbol{\theta}(t))/\rm dt}\right|
 & \leq 
\frac{\int_{\xi\in [\lambda, \infty]}\rm ~d\left|\nabla_{\boldsymbol{\theta}}\langle  g_{\boldsymbol{\theta}}(\xi, \cdot), g_{\boldsymbol{\theta}}(\xi, \cdot)\rangle_{L^{2}(\mathcal{M})}\right|}{\left|\nabla_{\boldsymbol{\theta}} \mathcal{L}(\boldsymbol{\theta})\right|}\\
&\leq
\lambda^{-m}\frac{\int_{\xi\in [\lambda, \infty]} \rm ~d\left|\nabla_{\boldsymbol{\theta}}\langle \xi^m g_{\boldsymbol{\theta}}(\xi, \cdot), g_{\boldsymbol{\theta}}(\xi, \cdot)\rangle_{L^{2}(\mathcal{M})} \right|}{\left|\nabla_{\boldsymbol{\theta}} \mathcal{L}(\boldsymbol{\theta})\right|}\\
&\leq
\beta\lambda^{-m},
\end{align*}
and
$$\left|\frac{\rm d\mathcal{L}_\lambda(\boldsymbol{\theta}(t))/\rm dt}{\rm d\mathcal{L}(\boldsymbol{\theta}(t))/\rm dt}\right|
\geq 
1- \left|1-\frac{\rm d\mathcal{L}_\lambda(\boldsymbol{\theta}(t))/\rm dt}{\rm d\mathcal{L}(\boldsymbol{\theta}(t))/\rm dt}\right|
\geq
1- \beta\lambda^{-m}.
$$
\end{proof}

\section{F-principle for QNN}\label{appendix::sec::QNN}

In this section, we address a quantum machine learning task in which the objective is to train a model on a dataset \( \mathcal{S} = \left\{ \rho^{(s)} \right\}_{s=1}^N \) consisting of \( n \)-qubit quantum states. The quantum model is parameterized by a QNN, which is a unitary quantum channel \( \mathcal{C}_\theta \). This channel acts on the input states \( \rho^{(s)} \) according to the transformation
\[
\mathcal{C}_\theta\left( \rho^{(s)} \right) = U(\theta) \rho^{(s)} U^{\dagger}(\theta).
\] Here, $U(\theta)$ is defined as
\begin{equation}
    U(\theta)=\prod_{m=1}^M U_m\left(\theta_m\right)W_m, \quad U_m\left(\theta_m\right)=e^{-i \theta_m H_m}
\end{equation}
where $\theta \in \mathbb{R}^M$ is a vector of trainable parameters, $W_m$ is a generic unitary operator that does not depend on any angle $\theta_m$, typically chosen from Clifford gates. The operators $H_m$ are traceless Hermitian operators drawn from a set of generators $\mathcal{G}$. This allows us to express $\mathcal{C}_\theta$ as a concatenation of $M$ unitary channels
\begin{equation}
\mathcal{C}_\theta=\mathcal{C}_{\theta_M}^M \circ \cdots \circ \mathcal{C}_{\theta_1}^1,
\end{equation}
with $\mathcal{C}_{\theta_m}^{\mathrm{m}}\left(\rho^{(s)}\right)=U_m\left(\theta_m\right) \rho^{(s)} U_m^{\dagger}\left(\theta_m\right)$. Thus, the output of the QNN is a parametrized state
\begin{equation}
\rho_\theta^{(s)}=\mathcal{C}_\theta\left(\rho^{(s)}\right)=\mathcal{C}_{\theta_M}^M \circ \cdots \circ \mathcal{C}_{\theta_1}^1\left(\rho^{(s)}\right)
\end{equation}
Suppose the training set is $\mathcal{D} = \{(x_s, y_s)\}_{s=1}^N \subset \Omega$, where $\Omega$ is the sample space and data in $\mathcal{D}$ is sampled by distributed measure $\mu$ independently.
And the loss function is defined as 
\begin{equation}
\mathcal{L}(\theta)=\frac{1}{N}\sum_{s=1}^N \left(\operatorname{Tr}\left[\mathcal{C}_\theta\left(\rho^{(s)}\right) O_s\right]\right)^2,
\end{equation}
where \( O_s \) is a (potentially) data-instance-dependent measurement.
We will prove that, for sufficiently large training set size \( N \), the gap between \( \mathcal{L}(\theta) \) and \( \int_\Omega \left( \operatorname{Tr}\left[\mathcal{C}_\theta\left(\rho^{(s)}\right) O_s\right] \right)^2 d\mu \) is less than \( \varepsilon \) with probability \( 1 - \delta \). This can be straightforwardly shown using the Hoeffding bound~\cite{Hoeffding1994}:
\[
\operatorname{Pr} \left( \left|
\mathcal{L}(\theta)
-
\int_\Omega \left( \operatorname{Tr}\left[\mathcal{C}_\theta\left(\rho^{(s)}\right) O_s\right] \right)^2 d\mu
\right| > \epsilon \right) 
\leq 2 \exp{\left(-\frac{2 N \epsilon^2}{(b-a)^2}\right)},
\]
where \( a \) and \( b \) represent the lower and upper bounds of \( \left( \operatorname{Tr}\left[\mathcal{C}_\theta\left(\rho^{(s)}\right) O_s\right] \right)^2 \), respectively.
To ensure this inequality holds, it suffices to choose the training set size \( N \) such that
\[
2 \exp{\left(-\frac{2 N \epsilon^2}{(b-a)^2}\right)} \leq \delta.
\]
This implies the requirement
\[
N \geq \frac{(b-a)^2}{2 \epsilon^2} \ln \left( \frac{2}{\delta} \right).
\]
Thus, with high probability, the variational parameters \( \theta \) are trained by minimizing the loss function \( \mathcal{L}(\theta) \), which we assume to be of the form
\[
\mathcal{L}(\theta)  \approx \int_\Omega \left( \operatorname{Tr}\left[ \mathcal{C}_\theta\left(\rho^{(s)}\right) O_s \right] \right)^2 d\mu(\rho^{(s)}).
\]
By considering the quantum state inside the trace map after the quantum neural network on the data \( \rho^{(s)} \) as a function of \( \rho^{(s)} \), the approximate loss function can be written as
\[
\int_\Omega \left( \operatorname{Tr}\left[\mathcal{C}_\theta\left(\rho^{(s)}\right) O_s \right] \right)^2 d\mu(\rho^{(s)}) = 
\int_{\mathbb{CP}^{2^n -1}} \left( \mathbf{1}_\Omega \operatorname{Tr}\left[ \mathcal{C}_\theta\left(\rho^{(s)}\right) O_s \right] \right)^2 d\mu(\rho^{(s)}) 
= \int_{\mathbb{CP}^{2^n -1}} g(\theta,x)^2 d\mu(x),
\]
where \( \mathbf{1}_\Omega \) is the indicator function of the set \( \Omega \), i.e., it takes the value of 1 when \( x \in \Omega \) and 0 otherwise. 

Since the gates and activation functions in quantum neural networks are smooth functions of \( x \) and $\theta$, \( g(\theta,x) \) is therefore an almost-everywhere smooth function on the smooth manifold \( \mathbb{CP}^{2^n -1} \). Consequently, the loss function of the quantum machine learning model we consider in this case satisfies the form and assumptions in Theorem~\ref{thm:main_theorem1}, meaning that the quantum neural networks described above adhere to the frequency principle when the training dynamics satisfy Eq.\ref{eq: train dynamics}. In other words, quantum neural networks first learn the low-frequency components of the quantum data, followed by learning the high-frequency components.

\section{Noise Effect Frequency Components in Parameterized Quantum Circuits}\label{appendix:sec:noise_effect_frequency_components}

In this section, we demonstrate that { Pauli noise } in Eq.~\eqref{eq:generalized_noise_model} introduced in rotation gates, where the Pauli operator in the noise channel aligns with the rotation axis, significantly suppresses the high-frequency components of the circuit. This can be viewed as a generalization of the result presented in Ref.~\cite{fontana2022spectral,fontana2025classical}.
We restate the result as follows:
\begin{lemma}[\cite{fontana2025classical}]\label{lemma:frequency_suppression}
Consider a parameterized quantum circuit  consisting of $L$ rotation gates, where all the rotation gates are the  {$R_Z$ gate} with parameter $\bm{\theta} \in (S^{1})^{L}\equiv \mathbb{T}^{L}$, and the observable is $O$. Denote the ideal expectation value as a function $\langle O(\bm{\theta})\rangle$. Suppose each $R_Z$ gate is followed by a  {dephasing channel} with strength $\gamma$. Then in the frequency domain, the noisy expectation value is given by
\begin{equation}
    \langle O_{\mathrm{noisy}}(\bm{\theta})\rangle
     = \sum_{\bm{\omega} \in \mathbb{Z}^L} (1 - 2\gamma)^{\|\bm{\omega}\|_1} \, \hat{f}(\bm{\omega}) \, e^{i \bm{\omega} \cdot \bm{\theta}},
\end{equation}
where $\hat{f}(\bm{\omega})$ are the Fourier coefficients of the noiseless expectation value
\begin{equation}
    \langle O(\bm{\theta})\rangle = \sum_{\bm{\omega} \in \mathbb{Z}^L} \hat{f}(\bm{\omega}) \, e^{i \bm{\omega} \cdot \bm{\theta}},
\end{equation}
and $\|\bm{\omega}\|_1$ denotes the $\ell_1$-norm of $\bm{\omega}$.
\end{lemma}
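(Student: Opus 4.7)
The plan is to combine the Pauli path-integral representation with an explicit computation of how the dephasing channel acts on each Pauli component, essentially specializing the argument sketched around Theorem~\ref{thm:frequency_suppression} to the case where every rotation axis is $Z$. First I would expand both the noiseless and noisy expectation values as sums over Pauli paths $s=(s_0,\dots,s_L)\in \bm{P}_n^{L+1}$, writing $\langle O(\bm\theta)\rangle=\sum_s f(s,\bm\theta)\Tr{s_0\rho}$ with $f(s,\bm\theta)=\Tr{O s_L}\prod_{i=1}^L \Tr{s_i\,\mathcal{U}_i(\theta_i)(s_{i-1})}$. Since each $\mathcal{U}_i$ is a $Z$-rotation, the conjugation identity $e^{-i\theta_i Z/2} P e^{i\theta_i Z/2}$ equals $P$ for $P\in\{I,Z\}$ and is a linear combination of $\cos\theta_i\,P$ with a $\sin\theta_i$ term on the partner Pauli for $P\in\{X,Y\}$. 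Hence each local trace is either a $\theta_i$-independent constant or proportional to $\sin\theta_i$ or $\cos\theta_i$, and the trigonometric case occurs exactly when $s_{i-1}$ anti-commutes with the rotation axis $Z_i$.

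Next I would compute the action of $\mathcal{E}_i(\cdot)=(1-\gamma)(\cdot)+\gamma Z_i(\cdot)Z_i$ on the Pauli basis: it leaves $I,Z$ invariant and contracts $X,Y$ by exactly $(1-2\gamma)$. Inserting $\mathcal{E}_i$ after each rotation therefore multiplies $\Tr{s_i\,\mathcal{U}_i(\theta_i)(s_{i-1})}$ by $(1-2\gamma)$ whenever the relevant single-qubit component of $s_i$ lies in $\{X,Y\}$, and leaves it untouched otherwise. The crucial remark is that, by the orthogonality of Pauli strings under the trace inner product, a nonzero trace with a $\sin/\cos$ dependence forces the supports of $s_{i-1}$ and $s_i$ to differ only by the anticommuting partner swap on the axis qubit, which is exactly a swap between $X$ and $Y$; therefore the appearance of a trigonometric factor at step $i$ and the $(1-2\gamma)$ suppression from $\mathcal{E}_i$ are in one-to-one correspondence. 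This yields the path-level identity $\tilde f(s,\bm\theta)=(1-2\gamma)^{\#\mathrm{trig}(s)}\,f(s,\bm\theta)$.

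Finally I would pass to the Fourier representation. For each path $s$, the coordinate-wise frequency of $f(s,\bm\theta)$ along $\theta_i$ is $0$ if $s_{i-1}$ commutes with $Z_i$ and $\pm 1$ otherwise, so the Fourier support collapses to $\bm\omega\in\{-1,0,+1\}^L$ and $\|\bm\omega\|_1=\#\mathrm{trig}(s)$. Summing the path-level identity, the suppression factor $(1-2\gamma)^{\#\mathrm{trig}(s)}$ is constant on the level sets $\{s:\bm\omega(s)=\bm\omega\}$ and may be pulled outside the sum over paths that produce a given Fourier mode, giving
\[
\langle O_{\mathrm{noisy}}(\bm\theta)\rangle
=\sum_{\bm\omega\in\mathbb{Z}^L}(1-2\gamma)^{\|\bm\omega\|_1}\hat f(\bm\omega)\,e^{i\bm\omega\cdot\bm\theta},
\]
which is the desired identity.

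The main obstacle, as I see it, is the bookkeeping in the last step: one must verify that different Pauli paths contributing to the same Fourier mode $\bm\omega$ share the same number of trigonometric factors, so that the $(1-2\gamma)^{\#\mathrm{trig}(s)}$ prefactor depends only on $\bm\omega$ and not on $s$. This amounts to showing that $\#\mathrm{trig}(s)$ is a function of the $\sin/\cos$ pattern already encoded in $\bm\omega(s)$, which follows from the fact that each rotation contributes at most one trigonometric factor and that factor is present precisely when the corresponding component of $\bm\omega$ is nonzero; the Pauli content of $s$ then reproduces exactly $\|\bm\omega\|_1$ anticommuting positions.
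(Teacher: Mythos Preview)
Your proposal is correct and follows essentially the same route as the paper's argument (given in Supp.~\ref{app:sec:impact_of_noise_on_frequency_spectrum} for the generalized Theorem~\ref{thm:frequency_suppression}, of which this lemma is the $P_i=Z$ special case): expand both expectation values via the Pauli path integral, show each factor $\Tr{s_i\,\mathcal{U}_i(\theta_i)(s_{i-1})}$ acquires a $(1-2\gamma)$ factor precisely when it carries a $\sin\theta_i$ or $\cos\theta_i$, deduce $\tilde f(s,\bm\theta)=(1-2\gamma)^{\#\mathrm{trig}(s)}f(s,\bm\theta)$, and then identify $\#\mathrm{trig}(s)=\|\bm\omega\|_1$ in Fourier coordinates. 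The only cosmetic difference is that the paper uses self-adjointness of $\mathcal{E}_i$ to move the channel onto $\mathcal{U}_i(s_{i-1})$ and then checks commutation with $P_i$, whereas you act with $\mathcal{E}_i$ on $s_i$ directly and argue via orthogonality that the $X/Y$ component of $s_i$ on the axis qubit is present iff the trigonometric factor is; both observations are equivalent and lead to the same path-level identity.
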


We begin by presenting the circuit model and noise model.
Next, we show that if the expectation value of the parameterized quantum circuits is viewed as a function of the rotation angles, then under our noise model, noise suppresses the high-frequency components of this function.
Finally, we prove that the noisy circuit expectation can be well approximated by a low-frequency truncation, and the latter can be computed with polynomial complexity.

\subsection{Circuit Model and Noise Model}\label{app:sec:circuit_model_and_noise_model}

In the current NISQ era, parameterized quantum circuits are widely employed in various near-term quantum algorithms~\cite{kandala2017hardware,farhi2014quantum}.  
A typical $n$-qubit parameterized quantum circuit, denoted as $C(\bm{\theta})$, consists of a sequence of Pauli rotation gates and non-parameterized Clifford gates. Each Pauli rotation gate is written as $e^{-i\frac{\theta}{2} P}$, where $\theta$ is the rotation angle and $P \in \{\mathbb{I}, X, Y, Z\}^{\otimes n}$. Clifford gates are unitary operators that normalize the $n$-qubit Pauli group $\bm{P}_n$:  
\[
\{C \in U_{2^n} \mid C \bm{P}_n C^\dagger = \bm{P}_n\}.
\]

Without loss of generality, we assume that a parameterized quantum circuit takes the form
\begin{equation}
  C(\bm{\theta}) = U_L(\theta_L) \cdots U_1(\theta_1),
\end{equation}
where $\bm{\theta} = (\theta_1, \ldots, \theta_L)$ are the rotation angle and $L$ is the circuit depth. Each unitary $U_i(\theta_i)$ is of the form 
\[
U_i(\theta_i) := e^{-i \theta_i P_i / 2} C_i,
\]
where $P_i \in \{\mathbb{I}, X, Y, Z\}^{\otimes n}$ is a $n$-qubit Pauli operator and $C_i$ is a Clifford gate.

To describe noise effects, we also consider the channel representation of the parameterized quantum circuits. Let $\mathcal{U}_i(\theta_i)$ denote the unitary channel defined by
\[
\mathcal{U}_i(\theta_i)(\rho) = U_i(\theta_i)\rho U_i^\dagger(\theta_i),
\]
where $\rho$ is a quantum state. The full channel $\mathcal{C}(\bm{\theta})$ is then given by the composition:
\begin{equation}
  \mathcal{C}(\bm{\theta}) = \mathcal{U}_L(\theta_L) \circ \cdots \circ \mathcal{U}_1(\theta_1).
\end{equation}
This formulation of parameterized quantum circuits is sufficiently general to encompass most ansatze used in near-term quantum algorithms.

Mostly, the quantum circuit $\mathcal{C}(\bm{\theta})$ is applied to an initial state $\rho$, and what we are interested in is the expectation value of an observable $O$ on the final state, given by
\begin{equation}
  \langle O \rangle = \tr{O \mathcal{C}(\bm{\theta})(\rho)}.
\end{equation}
However, in practice, the quantum circuit is often subject to noise. We use $\tilde{\mathcal{U}}_i(\theta_i)$ to denote the noisy version of the unitary channel $\mathcal{U}_i(\theta_i)$, which can be modeled as a quantum channel. And the unitary channel with noise $\tilde{\mathcal{U}}_i(\theta_i)$ can be decomposed into a unitary channel $\mathcal{U}_i(\theta_i)$ and a noise channel $\mathcal{E}_i$:
\begin{equation}\label{eq:unitary_channel_with_noise}
  \tilde{\mathcal{U}}_i(\theta_i)=\mathcal{E}_i \circ \mathcal{U}_i(\theta_i).
\end{equation}
The noisy quantum circuit $\tilde{\mathcal{C}}(\bm{\theta})$ is then defined as the composition of the noisy unitary channels:
\begin{equation}
    \begin{aligned}
          \tilde{\mathcal{C}}(\bm{\theta}) & =\tilde{\mathcal{U}}_L(\theta_L) \circ \cdots \circ \tilde{\mathcal{U}}_1(\theta_1)  =\mathcal{E}_L \circ \mathcal{U}_L(\theta_L) \circ \cdots \circ \mathcal{E}_1 \circ \mathcal{U}_1(\theta_1).
    \end{aligned}
\end{equation}
The noisy circuit expectation value is denoted as
\begin{equation}\label{eq:noisy_expectation_value}
    \langle O_{\text{noisy}}\rangle=\tr{O \tilde{\mathcal{C}}(\bm{\theta})(\rho) }.
\end{equation}

In our context, we mainly consider two types of noise: 1. Pauli noise in Eq.~\eqref{eq:noise_channel}  introduced in rotation gates, where the Pauli operator in the noise channel aligns with the rotation axis, significantly suppresses the high-frequency components of the circuit. 2. Local depolarizing noise. 

For the noise channel $\mathcal{E}_i$ in Eq.~\eqref{eq:unitary_channel_with_noise}, suppose $U_i(\theta_i) = \exp{-i \frac{\theta_i}{2} P_i} C_i$, then noise channel $\mathcal{E}_i$ satisfies
\begin{equation}\label{eq:noise_channel}
\mathcal{E}_i(\rho) = (1-\gamma) \rho + \gamma P_i \rho P_i^\dagger,
\end{equation}
where $\gamma$ is the noise rate and $P_i \in \{X, Y, Z\}^{\otimes n}$ is the Pauli operator that aligns with the rotation axis of the rotation gate $\exp{-i \frac{\theta_i}{2} P_i}$.

As for local depolarizing noise, we consider the following noise channel on the single qubit:
\begin{equation}\label{eq:noise_channel_dep}
\mathcal{E}_{\mathrm{depol}}(\rho) = (1 - \frac34\gamma)\rho + \frac{\gamma}{4}(X \rho X + Y \rho Y + Z \rho Z),
\end{equation}
which uniformly applies one of the three non-identity Pauli errors with equal probability. It models a complete loss of quantum information with probability $\gamma$.

We { reiterate} that the rotation gates in the parameterized quantum circuits are no longer restricted to \(R_Z\) gates; they are generalized to \(\exp{-i P_I\theta_i/2}\), where \(P_i \in \{I, X, Y, Z\}^{\otimes n}\) denotes an \(n\)-qubit Pauli operator, for \(i = 1,2,\cdots, L\). 
Moreover, the noise on a rotation gate \(\exp{-i P_i\theta_i/2}\) is not limited to dephasing; we adopt the more general Pauli channel in Eq.~\eqref{eq:noise_channel}.

\subsection{Impact of Noise on the Frequency Spectrum of Parameterized Quantum Circuits}\label{app:sec:impact_of_noise_on_frequency_spectrum}

The expectation value of a parameterized quantum circuit can be regarded as a function of the rotation gate angles $\bm{\theta}$. In this section, we analyze the impact of noise-particularly the noise model defined in Eq.~\eqref{eq:noise_channel}-on the frequency spectrum, and demonstrate that noise applied after rotation gates can effectively suppress high-frequency components, resulting in a spectrum dominated by low-frequency features.

To analyze the impact of noise on the frequency spectrum, we introduce the concept of Pauli paths and express the expectation value of a quantum circuit using the Pauli path integral formalism. 
This technique has been widely used for classically simulating quantum circuits~\cite{aharonov2023polynomial,PhysRevLett.133.120603,lerch2024efficient,angrisani2025simulating,fontana2025classical,rudolph2025pauli}.
In our context, this approach provides an intuitive representation of the expectation value as a trigonometric polynomial in the variational parameters $\bm{\theta}$ for both noisy and noiseless circuits.

We begin by analyzing the noiseless case and introduce the definitions of Pauli paths and the Pauli path integral. Although these concepts have been comprehensively discussed in Ref.~\cite{aharonov2023polynomial,PhysRevLett.133.120603,rudolph2025pauli}, we present them here to ensure the completeness of our context, and we use the notation as in Ref.~\cite{zhang2024clifford}.
We consider the expectation value
\begin{equation}\label{eq:exp_val_m}
  \begin{aligned}
      \langle O \rangle =  \tr{{U}_L(\theta_L)  \cdots {U}_1(\theta_1) \rho  {U}_1(\theta_1)^{\dagger} \cdots  {U}_L(\theta_L)^{\dagger} O } 
  \end{aligned}
\end{equation}
where $O \in \{\mathbb{I}, X, Y, Z\}^{\otimes n}$ is an $n$-qubit Pauli operator, $\rho$ denotes the $n$-qubit initial state, and the unitary sequence ${U}_L(\theta_L)  \cdots {U}_1(\theta_1)$ represents the parameterized quantum circuit. Each unitary gate is defined as ${U}_i(\theta_i):=C_i\exp{-i \theta_i P_i / 2} $, where $P_i \in \{\mathbb{I}, X, Y, Z\}^{\otimes n}$ is a Pauli operator and $C_i$ is a Clifford gate.

In the Heisenberg picture, the expectation value in Eq.~\eqref{eq:exp_val_m} can be rewritten as $\langle O \rangle = \operatorname{Tr}(\rho \tilde{O})$, where the Heisenberg-evolved observable is given by
\[
\tilde{O} = {U}_1^{\dagger}(\theta_1) \cdots  {U}_L^{\dagger}(\theta_L) O {U}_L(\theta_L)  \cdots {U}_1(\theta_1).
\]
Therefore, evaluating the expectation value is equivalent to applying Heisenberg evolution to the observable $O$, followed by measurement with respect to the initial state $\rho$.

Specifically, a Heisenberg evolution of a Pauli operator $P'$ under the unitary gate ${U}(\theta):=\exp{-i \theta P / 2}C$ is given by 
\begin{equation}
{U}(\theta)^{\dagger} P' {U}(\theta)= C^\dagger\exp{i \theta P / 2}  P^\prime \exp{-i \theta P / 2}C.
\end{equation}
The action of a Pauli rotation on a Pauli operator results in
\begin{equation}
e^{i \theta P / 2} P^\prime e^{-i \theta P / 2} = 
\begin{cases}
P^\prime, & [P, P^\prime] = 0, \\
\cos(\theta) P^\prime + i \sin(\theta) P P^\prime & \{P, P^\prime\} = 0.
\end{cases} \label{eq:evolution}
\end{equation}

When $P'$ anticommutes with $P$, the operator $P'$ is transformed into a linear combination of Pauli operators. We refer to these resulting Pauli terms as distinct  {Pauli paths}. Thus, the expression $\cos(\theta)\, P' + i \sin(\theta)\, P P'$ can be interpreted as a sum over two Pauli paths. Since both $P$ and $i P P'$ are Pauli operators, the overall evolution becomes:
\begin{equation}
  {U}(\theta)^{\dagger} P' {U}(\theta) = C^\dagger \left( \cos(\theta)\, P' + i \sin(\theta)\, P P' \right) C,
\end{equation}
where $C^\dagger P' C$ and $i C^\dagger P P' C$ are both Pauli operators. Therefore, in the anticommuting case, applying $U(\theta)$ to a Pauli operator results in a linear combination of two Pauli operators, each weighted by a first-order trigonometric polynomial.

In contrast, when $P'$ commutes with $P$, the evolution yields a single Pauli operator:
\begin{equation}
    {U}(\theta)^{\dagger} P' {U}(\theta) = C^\dagger P' C.
\end{equation}
To formalize this recursive evolution process, we introduce the notion of a {Pauli path}.

A Pauli path is defined as a sequence $s = (s_0, s_1, \dots, s_L) \in \bm{P}_n^{L+1}$, where $\bm{P}_n = \left\{ \sfrac{\mathbb{I}}{\sqrt{2}}, \sfrac{X}{\sqrt{2}}, \sfrac{Y}{\sqrt{2}}, \sfrac{Z}{\sqrt{2}} \right\}^{\otimes n}$ denotes the set of normalized $n$-qubit Pauli words.

The path $s$ records the evolution of an observable under the Heisenberg picture. Initially, $s_L$ is set to the normalized observable $\sfrac{O}{(\sqrt{2})^n}$, indicating that the evolution begins from $O$. Then $s_{L-1}$ represents the possible Pauli operator obtained by evolving $s_L$ through the unitary gate $U_L(\theta_L) = C_L \exp{-i \theta_L P_L / 2}$. That is,
\begin{enumerate}[label=(\alph*)]
  \item If $P_L$ commutes with $C_L^\dagger s_L C_L$, then $s_{L-1} = C_L^\dagger s_L C_L$.
  \item If $P_L$ anticommutes with $C_L^\dagger s_L C_L$, then $s_{L-1}$ can be either $C_L^\dagger s_L C_L$ or $iP_L C_L^\dagger s_L C_L$, corresponding to the two possible evolution paths.
\end{enumerate}
Therefore, in the commutation case, 
$\Tr{s_i\mathcal{U}_i(\theta_i)(s_{i-1})}$ is constant. In anticommutation case, if $s_{L-1} = C_L^\dagger s_L C_L$, then $\Tr{s_i\mathcal{U}_i(\theta_i)(s_{i-1})}$ contains the $\cos(\theta_i)$ term and if $s_{L-1} = iP_L C_L^\dagger s_L C_L$, then $\Tr{s_i\mathcal{U}_i(\theta_i)(s_{i-1})}$ contains the $\sin(\theta_i)$ term.
Therefore, in both cases, $\Tr{s_i\mathcal{U}_i(\theta_i)(s_{i-1})}$ is either a constant or a constant multiplied by $\sin(\theta_i)$ or $\cos(\theta_i)$.

This process continues recursively until $s_0$ is determined. The complete sequence $s$ thus describes an evolution path of the observable $O$ through the circuit. While the overall global phase of each Pauli product is omitted in the path representation, it is captured by the following path weight function (sometimes we also call it the coefficient of path $s$):
\begin{equation}
  \begin{aligned}\label{eq:f00}
    f(s,\bm{\theta})=&\Tr{Os_L}\left(\prod_{i=1}^{L}\Tr{s_iU_i(\theta_i) s_{i-1}U_i(\theta_i)^\dagger}\right).
  \end{aligned}
\end{equation}
If $s$ is a valid path generated through the Heisenberg evolution of $O$, then $f(s, \bm{\theta})$ will be a coefficient expressed as a product of sine and cosine functions. Otherwise, $f(s, \bm{\theta}) = 0$.

Using the Pauli path integral, the Heisenberg-evolved observable can be expressed as
\begin{equation}\label{eq:heisenberg_evolved_observable}
  \begin{aligned}
    \tilde{O} &= {U}_1(\theta_1)^{\dagger} \cdots  {U}_L(\theta_L)^{\dagger} O {U}_L(\theta_L)  \cdots {U}_1(\theta_1)\\
    &=\sum_{s_0\in \bm{P}_n} \tr{{U}_1(\theta_1)^{\dagger} \cdots  {U}_L(\theta_L)^{\dagger} O {U}_L(\theta_L)  \cdots {U}_1(\theta_1)s_0} s_0\\
    &=\sum_{s_0\in \bm{P}_n} \tr{{U}_2(\theta_2)^{\dagger} \cdots  {U}_L(\theta_L)^{\dagger} O {U}_L(\theta_L)  \cdots {U}_1(\theta_1)s_0{U}_1(\theta_1)^{\dagger}} s_0\\
    &=\sum_{s_0,s_1\in \bm{P}_n} \tr{{U}_2(\theta_2)^{\dagger} \cdots  {U}_L(\theta_L)^{\dagger} O {U}_L(\theta_L)  \cdots {U}_2(\theta_2)s_1} \tr{s_1{U}_1(\theta_1)s_0{U}_1(\theta_1)^{\dagger}} s_0\\
    &\vdots\\
    &=\sum_{s\in \bm{P}^{L+1}_n}\Tr{Os_L}\left(\prod_{i=1}^{L}\Tr{s_iU_i(\theta_i) s_{i-1}U_i(\theta_i)^\dagger}\right) s_0\\
    &=\sum_{s\in \bm{P}^{L+1}_n} f(s,\bm{\theta}) s_0,
  \end{aligned}
\end{equation}
where the second and fourth equalities hold because $\bm{P}_n = \left\{ \sfrac{\mathbb{I}}{\sqrt{2}}, \sfrac{X}{\sqrt{2}}, \sfrac{Y}{\sqrt{2}}, \sfrac{Z}{\sqrt{2}} \right\}^{\otimes n}$ forms an orthonormal basis.

The expectation value of the observable $O$ can then be written as
\begin{equation}
  \begin{aligned}
    \langle O \rangle &= \tr{\rho \tilde{O}} =\sum_{s\in \bm{P}^{L+1}_n} f(s,\bm{\theta}) \tr{\rho s_0}.
  \end{aligned}
\end{equation}

For the noise circuit in Eq.~\eqref{eq:noisy_expectation_value} with noise model in Eq.~\eqref{eq:noise_channel} and Eq.~\eqref{eq:noise_channel_dep}, in Heisenberg evolution the expectation value can be expressed as
$$
\langle O_{\text{noisy}}\rangle = \tr{\rho\tilde{O}_{\text{noisy}}}
$$
where
$$
\tilde{O}_{\text{noisy}} = \mathcal{U}_1^\dagger \circ \mathcal{E}_1 \circ \cdots \circ \mathcal{U}_L^\dagger \circ \mathcal{E}_L (O).
$$
Note that $\mathcal{U}_i^\dagger$ is the adjoint of $\mathcal{U}_i$, and since $\mathcal{U}_i$ is a unitary channel, $\mathcal{U}_i^\dagger$ is the inverse of $\mathcal{U}_i$. 
And the noise channel $\mathcal{E}_i$ is a Pauli channel, therefore it is self-adjoint.

In noise case, Eq.~\eqref{eq:heisenberg_evolved_observable} can be written as
\begin{equation}\label{eq:heisenberg_evolved_observable_noise}
  \begin{aligned}
    \tilde{O}_{\text{noisy}} &=  \mathcal{U}_1^\dagger \circ \mathcal{E}_1 \circ \cdots \circ \mathcal{U}_L^\dagger \circ \mathcal{E}_L (O)\\
    &=\sum_{s_0\in \bm{P}_n} \tr{ \mathcal{U}_1^\dagger \circ \mathcal{E}_1 \circ \cdots \circ \mathcal{U}_L^\dagger \circ \mathcal{E}_L (O)s_0} s_0\\
    &=\sum_{s_0\in \bm{P}_n} \tr{ \mathcal{U}_2^\dagger \circ \mathcal{E}_2 \circ \cdots \circ \mathcal{U}_L^\dagger \circ \mathcal{E}_L (O) \mathcal{E}_1\circ \mathcal{U}_1(s_0)} s_0\\
    &=\sum_{s_0,s_1\in \bm{P}_n} \tr{\mathcal{U}_2^\dagger \circ \mathcal{E}_2 \circ \cdots \circ \mathcal{U}_L^\dagger \circ \mathcal{E}_L (O)s_1} \tr{s_1\mathcal{E}_1\circ \mathcal{U}_1(s_0)} s_0\\
    &\vdots\\
    &=\sum_{s\in \bm{P}^{L+1}_n}\Tr{Os_L}\left(\prod_{i=1}^{L}\Tr{s_i\mathcal{E}_i\circ \mathcal{U}_i(s_{i-1})}\right) s_0\\
    &=\sum_{s\in \bm{P}^{L+1}_n}\Tr{Os_L}\left(\prod_{i=1}^{L}\Tr{\mathcal{E}_i(s_i)\mathcal{U}_i(s_{i-1})}\right) s_0\\
    &=\sum_{s\in \bm{P}^{L+1}_n} \tilde f(s,\bm{\theta}) s_0,
  \end{aligned}
\end{equation}
where $\tilde f(s,\bm{\theta}) = \Tr{Os_L}\left(\prod_{i=1}^{L}\Tr{\mathcal{E}_i(s_i)\mathcal{U}_i(s_{i-1})}\right)$ is weight (or coefficient) of the Pauli path $s$ with noise and the second last equality holds because the Pauli noise channel $\mathcal{E}_i$ is self-adjoint with respect to the Hilbert-Schmidt inner product.

We show that for each Pauli path $s$, the corresponding Pauli coefficient $f(s,\bm{\theta})$ is scaled by a factor of $(1 - 2\gamma)^{\#\mathrm{trig}(s)}$ in noise model of Eq.~\eqref{eq:noise_channel}, where $\#\mathrm{trig}(s)$ denotes the degree of $f(s,\bm{\theta})$ as a trigonometric polynomial. 
This result directly implies that, when viewing the expectation value of a parameterized quantum circuit as a function of its parameters, the introduction of noise exponentially suppresses the high-frequency components in the frequency domain-that is, the higher the frequency, the stronger the suppression, and the suppression factor is exponential with frequency.

We take Pauli path $s$ with non-zero coefficient $\tilde f(s,\bm{\theta})$ as an example to illustrate this point.
Recall that, in the noise case, the Pauli path coefficient $\tilde f(s,\bm{\theta})$ can be expressed as $$
\Tr{Os_L}\left(\prod_{i=1}^{L}\Tr{\mathcal{E}_i(s_i)\mathcal{U}_i(s_{i-1})}\right).
$$
For each $i = 1,2,\cdots L$, if $\mathcal{U}_i$ is the Clifford gate (i.e. $U_i(\theta_i) = \exp{-i\theta_i I^{\otimes n}/2}C_i$), then $\Tr{\mathcal{E}_i(s_i)\mathcal{U}_i(s_{i-1})}$ is a constant, which does not depend on the parameters $\bm{\theta}$. 
And since the noise is only introduced in the rotation gate, $\mathcal{E}_i$ is the identity channel, $\Tr{\mathcal{E}_i(s_i)\mathcal{U}_i(s_{i-1})} = \Tr{s_i\mathcal{U}_i(s_{i-1})}$, which means noise does not affect this term.

If $\mathcal{U}_i$ is a non-Clifford gate, then it can be written as
\[
U_i = \exp{-i\theta_i P_i/2}C_i.
\]
Since $P_i$ is $n$-qubit Pauli word, this leads to two cases, depending on whether $P_i$ commutes with $C_i s_{i-1} C_i^\dagger$:
\begin{enumerate}
    \item If $P_i$ commutes with $C_i s_{i-1} C_i^\dagger$, from Eq.~\eqref{eq:evolution}, we have
\[
\Tr{\mathcal{E}_i(s_i)\mathcal{U}_i(s_{i-1})} = \Tr{\mathcal{E}_i(s_i) C_i s_{i-1} C_i^\dagger} = \Tr{s_i\, \mathcal{E}_i(C_i s_{i-1} C_i^\dagger)}.
\]
Since $P_i$ commutes with $C_i s_{i-1} C_i^\dagger$ and $\mathcal{E}_i$ only contains Pauli operator $P_i$, we have
\[
\mathcal{E}_i(C_i s_{i-1} C_i^\dagger) = C_i s_{i-1} C_i^\dagger.
\]
Therefore, the noise does not affect $\Tr{\mathcal{E}_i(s_i)\mathcal{U}_i(s_{i-1})}$, and this term remains constant(contains no trigonometric polynomial terms).
\item If $P_i$ anticommutes with $C_i s_{i-1} C_i^\dagger$, from Eq.~\eqref{eq:evolution}, we have
\begin{equation}\label{eq:app:01}
\Tr{\mathcal{E}_i(s_i)\mathcal{U}_i(s_{i-1})} = \Tr{\mathcal{E}_i(s_i) \left(\cos(\theta_i)P^\prime_i + i\sin(\theta_i) P_i P^\prime_i\right)} = \Tr{s_i \left(\cos(\theta_i)\mathcal{E}_i(P^\prime_i) + i\sin(\theta_i) \mathcal{E}_i(P_i P^\prime_i)\right)}.
\end{equation}
where $P^\prime_i = C_i s_{i-1} C_i^\dagger$. Since $P_i$ anticommutes with $P^\prime_i$ and $P_i P_i^\prime$, $\mathcal{E}_i(P^\prime_i) = (1-2\gamma) P_i^\prime$ and $\mathcal{E}_i(P_i P^\prime_i) = (1-2\gamma) P_i P_i^\prime$
therefore $\left(\cos(\theta_i)\mathcal{E}_i(P^\prime_i) + i\sin(\theta_i) \mathcal{E}_i(P_i P^\prime_i)\right) = (1-2\gamma) \left(\cos(\theta_i)P^\prime_i + i\sin(\theta_i) P_i P^\prime_i\right)$. Therefore
$$
\Tr{\mathcal{E}_i(s_i)\mathcal{U}_i(s_{i-1})} = (1-2\gamma) \Tr{s_i\mathcal{U}_i(s_{i-1})}.
$$
Combining with Eq.~\eqref{eq:app:01}, we have
$$
\Tr{\mathcal{E}_i(s_i)\mathcal{U}_i(s_{i-1})} = \Tr{s_i \left(\cos(\theta_i)\mathcal{E}_i(P^\prime_i) + i\sin(\theta_i) \mathcal{E}_i(P_i P^\prime_i)\right)} = (1-2\gamma) \left( 
\cos(\theta_i) \Tr{s_i P^\prime_i} + \sin(\theta_i)\Tr{s_i  i P_i P^\prime_i}
\right),
$$
both $\Tr{s_i P^\prime_i}$ and $\Tr{s_i  i P_i P^\prime_i}$ are non-zero constants, therefore $\Tr{\mathcal{E}_i(s_i)\mathcal{U}_i(s_{i-1})}$ is a trigonometric polynomial of degree $1$ in $\theta_i$, and compare to noiseless case, a factor of $(1-2\gamma)$ is introduced.
\end{enumerate}

For each $i = 1,2,\cdots L$, we have shown that $\Tr{\mathcal{E}_i(s_i)\mathcal{U}_i(s_{i-1})}$ is either a constant as same as the noiseless case, or a trigonometric polynomial of degree $1$ in $\theta_i$, which introduces a factor of $(1-2\gamma)$ compared to the noiseless case $\Tr{s_i\mathcal{U}_i(s_{i-1})}$.
Therefore, we have shown that for each Pauli path $s$ with non-zero coefficient $\tilde f(s,\bm{\theta})$, the Pauli path coefficient $\tilde f(s,\bm{\theta}) = (1-2\gamma)^{\#\mathrm{trig}(s)}f(s,\bm{\theta})$, where $\#\mathrm{trig}(s)$ denotes the degree of $f(s,\bm{\theta})$ as a trigonometric polynomial. 

Therefore, the expectation value of the noisy quantum circuit can be expressed as
\begin{equation}\label{eq:noisy_expectation_value_final}
    \begin{aligned}
        \tr{\rho \tilde{O}_{\text{noisy}}}  &=\sum_{s\in \bm{P}^{L+1}_n} \tilde f(s,\bm{\theta}) \tr{s_0 \rho} = \sum_{s\in \bm{P}^{L+1}_n} (1-2\gamma)^{\#\mathrm{trig}(s)} f(s,\bm{\theta}) \tr{s_0 \rho}, \\
    \end{aligned}
\end{equation}
which illustrates a strong correlation between the degree of noise suppression and the frequency components of the circuit expectations.

Moreover, this result does not depend on the depth and the qubit number of the parameterized quantum circuits, which means that noise has a destructive impact on high-frequency components, even in shallow circuits. This further highlights the fragility of high-frequency components, as the suppression occurs even under such limited and relatively mild noise conditions.

This result directly implies that, when viewing the expectation value of a parameterized quantum circuits as a function of its parameters, the introduction of noise into the quantum circuit suppresses the high-frequency components. In other words, the higher the frequency, the more strongly it is attenuated by the noise, indicating that noise acts as a suppressor of high-frequency terms.
Based on this observation and using techniques from Ref.~\cite{aharonov2023polynomial,PhysRevLett.133.120603,rudolph2025pauli}, one can efficiently simulate noisy quantum circuits introduced noise, which is defined in Eq.~\eqref{eq:noise_channel} by computing only the low-frequency components. Note that such noise is relatively weak and sparse in the circuit, yet the entire circuit remains classically simulable efficiently.

In the next subsection, we provide a rigorous proof of the classical efficient simulability under this noise model.

\subsection{Theoretical Foundations of Low-Frequency Simulation for Noisy Quantum Circuits}\label{sec:theoretical_foundations_low_frequency_simulation_noisy_quantum_circuits}

In this section, we will provide a rigorous proof of the classical efficient simulability of noisy quantum circuits under the noise model described in Eq.~\eqref{eq:noise_channel}. Specifically, we will show that noisy quantum circuits can be well approximated by low-frequency terms, and the number of terms required is polynomial in nature, allowing for classical efficient simulation.

We suppose the observable $O$ is a normalized Pauli operator.
Using Eq.~\eqref{eq:noisy_expectation_value_final}, the expectation value of the noisy circuit can be expressed as:
\begin{equation}
    \begin{aligned}
        \langle O_{\text{noisy}}\rangle &=\tr{\rho \tilde{O}_{\text{noisy}}} \\
          &=\sum_{s\in \bm{P}^{L+1}_n} \tilde f(s,\bm{\theta}) \tr{s_0 \rho} \\
        & = \sum_{s\in \bm{P}^{L+1}_n} (1-2\gamma)^{\#\mathrm{trig}(s)} f(s,\bm{\theta}) \tr{s_0 \rho}. \\
    \end{aligned}
\end{equation}
We could truncate the summation to only include low-frequency components, i.e., $\#\mathrm{trig}(s) \leq \eta$, where $\eta$ is a positive integer. 
Then we have
\begin{equation}\label{eq:noisy_expectation_value_20}
    \begin{aligned}
        \langle O_{\text{noisy}}\rangle^{(\eta)} &=\sum_{s\in \bm{P}^{L+1}_n(\eta)} (1-2\gamma)^{\#\mathrm{trig}(s)} f(s,\bm{\theta}) \tr{s_0 \rho} \\
    \end{aligned}
\end{equation}
where $\bm{P}^{L+1}_n(\eta)$ is the set of Pauli paths s.t. $\forall s \in \bm{P}^{L+1}_n(\eta)$ satisfies $\#\text{trig}(s) \leq \eta$.  

Next, we compute the mean error between $\langle O_{\text{noisy}}\rangle$ and $\langle O_{\text{noisy}}\rangle^{(\eta)}$ with respect to parameter $\bm{\theta}$, which is defined as
\begin{equation}
    \begin{aligned}\label{eq:mean_error22}
        \mathbb{E}_{\bm\theta} \left| \langle O_{\text{noisy}}\rangle - \langle O_{\text{noisy}}\rangle^{(\eta)}\right|^2 & = \mathbb{E}_{\bm\theta} \left| \sum_{s\in \bm{P}^{L+1}_n\setminus \bm{P}^{L+1}_n(\eta)} (1-2\gamma)^{\#\mathrm{trig}(s)} f(s,\bm{\theta}) \tr{s_0 \rho} \right|^2\\
        & =  \mathbb{E}_{\bm\theta} \sum_{s\in \bm{P}^{L+1}_n\setminus \bm{P}^{L+1}_n(\eta)} \left| (1-2\gamma)^{\#\mathrm{trig}(s)} f(s,\bm{\theta}) \tr{s_0 \rho} \right|^2\\
        & \leq (1-2\gamma)^{2\eta} \mathbb{E}_{\bm\theta}  \sum_{s\in \bm{P}^{L+1}_n\setminus \bm{P}^{L+1}_n(\eta)} \left|  f(s,\bm{\theta}) \tr{s_0 \rho} \right|^2 \\
        & \leq (1-2\gamma)^{2\eta} \mathbb{E}_{\bm\theta}  \sum_{s\in \bm{P}^{L+1}_n} \left|  f(s,\bm{\theta}) \tr{s_0 \rho} \right|^2 \\
        & = (1-2\gamma)^{2\eta} \mathbb{E}_{\bm\theta} \left| \sum_{s\in \bm{P}^{L+1}_n}  f(s,\bm{\theta}) \tr{s_0 \rho} \right|^2\\
        & \leq (1-2\gamma)^{2\eta} \max_{\bm\theta} \left| \sum_{s\in \bm{P}^{L+1}_n}  f(s,\bm{\theta}) \tr{s_0 \rho} \right|^2\\
        & \leq (1-2\gamma)^{2\eta}\left\|  O \right\|^{2}_2,\\
    \end{aligned}
\end{equation}
where $\left\|  O \right\|_2$ is the operator norm~(i.e., its largest singular value).
The equality in the second line and the fifth line holds because for $s \neq s^\prime$, $f(s,\bm{\theta})$ and $f(s^\prime,\bm{\theta})$ represent different trigonometric polynomials, which are orthogonal to each other, i.e.
$$
\mathbb{E}_{\bm\theta}  f(s,\bm{\theta}) f(s^\prime,\bm{\theta}) = 0.
$$
Since the observable \( O \) is a Pauli operator, its operator norm is bounded by a constant, i.e., \( \|O\|_2 \leq C \) for some constant \( C \).  
If \( \eta = \mathcal{O}\left(\ln(\varepsilon)/\ln(1-2\gamma)\right) \), then we have
\begin{equation}
    \begin{aligned}
        \mathbb{E}_{\bm\theta} \left| \langle O_{\text{noisy}}\rangle - \langle O_{\text{noisy}}\rangle^{(\eta)}\right|^2 \leq \varepsilon.
    \end{aligned}
\end{equation}

Next we will show that, if $\eta = \mathcal{O}\left(\ln(\varepsilon)/\ln(1-2\gamma)\right)$,
the number of non-zero Pauli paths in $\bm{P}^{L+1}_n(\eta)$ is $\mathcal{O}\left(2^\eta\right)$ which is polynomial if we suppose the noise rate $\gamma$ is constant.
We explicitly construct a procedure to enumerate all non-zero Pauli paths \( s \) satisfying \( \#\mathrm{trig}(s) \leq \eta \). Through this enumeration process, we can conclude that the number of such Pauli paths is \( \mathcal{O}\left(2^\eta\right) \).

For each path $s = (s_0,\cdots, s_L)$, we have $s_L = O$, otherwise the coefficient of this path is zero.
As for $U_L(\theta_L)^\dagger s_L U_L(\theta_L)$, if $U_L(\theta_L)^\dagger s_L U_L(\theta_L)$ is a Pauli operator, then $s_{L-1} = U_L(\theta_L)^\dagger s_L U_L(\theta_L)$, otherwise the coefficient of this path is zero. In another case, $U_L(\theta_L)^\dagger s_L U_L(\theta_L)$ is the sum of two Pauli operators, with coefficients $\sin(\theta_L)$ and $\cos(\theta_L)$, then $s_{L-1}$ has two choices. When we use $s_i$ to construct $s_{i-1}$, if the path splits (which means $s_{i-1}$ has two choices), then a sin or cos term is introduced; otherwise, no sin or cos term is introduced. Since the number of sin or cos terms is at most $\eta$, therefore when we construct the path $s$ from $s_L$ to $s_0$ (following the process above), it could be split at most $\eta$ times, which means the number of non-zero Pauli paths is at most $2^\eta$, and to construct all non-zero Pauli paths, we need to enumerate all possible split points, which is at most $\eta$, therefore, the complexity of the enumeration is \( 2^\eta \).
Note that for each Pauli path \(s\), the computational complexity of evaluating its coefficient \(f(s, \bm{\theta})\) is polynomial in the number of qubits \(n\) and the circuit depth \(L\).  
Therefore, the cost of computing Eq.~\eqref{eq:noisy_expectation_value_20} is \(\mathcal{O}(2^{\eta}) \cdot \mathrm{poly}(n, L)\),  
which implies an overall runtime that is polynomial in the circuit depth \(L\), the number of qubits \(n\), and \(1/\varepsilon\).

If a noisy circuit not only contains noise defined in Eq.~\eqref{eq:noise_channel}, but also introduces additional Pauli noise elsewhere, then the coefficient of the Pauli path $s$ in the noise case $\tilde f(s,\bm\theta)$ has the following relationship with the noiseless case $f(s,\bm\theta)$:
\begin{equation}\label{eq:mean_error220000}
    \left|\tilde f(s,\bm\theta)\right| \leq  \left|(1-2\gamma)^{\#\mathrm{trig}(s)} f(s,\bm\theta)\right|.
\end{equation}
This is because the additional Pauli noise introduces an extra suppression factor for the coefficient of each Pauli path \( s \), similar to how noise in Eq.~\eqref{eq:noise_channel} contributes a suppression factor of \( (1 - \gamma) \).
Therefore, the mean error between the expectation value of the noisy quantum circuit and its low-frequency approximation is bounded by
\begin{equation}
    \begin{aligned}\label{eq:mean_error22333}
        \mathbb{E}_{\bm\theta} \left| \langle O_{\text{noisy}}\rangle - \langle O_{\text{noisy}}\rangle^{(\eta)}\right|^2 & \leq  \mathbb{E}_{\bm\theta} \sum_{s\in \bm{P}^{L+1}_n\setminus \bm{P}^{L+1}_n(\eta)} \left| (1-2\gamma)^{\#\mathrm{trig}(s)} f(s,\bm{\theta}) \tr{s_0 \rho} \right|^2\\
        & \leq  (1-2\gamma)^{2\eta}\left\|  O \right\|^{2}_2,\\
    \end{aligned}
\end{equation}
which is similar to Eq.~\eqref{eq:mean_error22}. 
Therefore, as long as noise defined in Eq.~\eqref{eq:noise_channel} is introduced after each rotation gate, the above simulation method remains efficient under the condition of small average error with respect to $\bm{\theta}$. The results of this section can be summarized in the following theorem.

\begin{theorem}
    For a parameterized quantum circuit whose unitary operator can be written as $C(\bm\theta) = U_L(\theta_L) \cdots U_1(\theta_1)$, where each $U_i(\theta_i) = \exp{-i P_i\theta_i/2}$ is a rotation gate with $P_i \in \{I,X,Y,Z\}^{\otimes n}$ being an $n$-qubit Pauli operator and $\bm{\theta} = (\theta_1,\cdots,\theta_L) \in \mathbb{T}^L$ is the vector of rotation angles. The circuit is followed by a measurement of an observable $O$ which $\|O\|_\infty$ is bounded by a constant. 
Noise introduced after each rotation gate $\exp{-i P_i\theta_i/2}$ is modeled by a Pauli channel $\mathcal{E}_i$ defined in Eq.~\eqref{eq:noise_channel} with noise rate $\gamma$. 
    The noisy quantum circuit expectation value $\langle O_{\text{noisy}}\rangle$ can be approximated by low-frequency terms $\langle O_{\text{noisy}}\rangle^{(\eta)}$ defined in Eq.~\eqref{eq:noisy_expectation_value_20}. Specifically, if $\eta = \mathcal{O}\left(\ln(\varepsilon)/\ln(1-2\gamma)\right)$, then the mean error between the expectation value of the noisy quantum circuit and its low-frequency approximation is bounded by \( \varepsilon \), i.e. $\mathbb{E}_{\bm\theta} \left| \langle O_{\text{noisy}}\rangle - \langle O_{\text{noisy}}\rangle^{(\eta)}\right|^2 \leq \varepsilon$.
    The computational complexity of the low-frequency approximation $\langle O_{\text{noisy}}\rangle^{(\eta)}$ is $\mathcal{O}(2^{\eta}) \cdot \mathrm{poly}(n, L)$, which means that it can be simulated classically efficiently.
\end{theorem}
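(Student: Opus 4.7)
The plan is to combine the Pauli-path representation derived in Supp.~\ref{app:sec:impact_of_noise_on_frequency_spectrum} with an $L^{2}$-orthogonality argument over the parameter torus $\mathbb{T}^{L}$ and a branching-based enumeration of surviving paths. First I would invoke Eq.~\eqref{eq:noisy_expectation_value_final} to express the truncation residual as
\begin{equation*}
\langle O_{\text{noisy}}\rangle - \langle O_{\text{noisy}}\rangle^{(\eta)} = \sum_{\substack{s\in\bm{P}^{L+1}_n \\ \#\mathrm{trig}(s)>\eta}}(1-2\gamma)^{\#\mathrm{trig}(s)}\,f(s,\bm{\theta})\,\Tr{s_0\rho},
\end{equation*}
so that every surviving term carries a damping factor of at most $(1-2\gamma)^{\eta}$ (using $0<1-2\gamma<1$).

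For the $\varepsilon$-bound I would then exploit the pairwise orthogonality of the trigonometric polynomials $\{f(s,\cdot)\}_{s}$ under the Haar measure on $\mathbb{T}^{L}$, i.e.\ $\mathbb{E}_{\bm\theta}[f(s,\bm\theta)\,f(s',\bm\theta)]=0$ for $s\neq s'$. Squaring the residual, taking the expectation, extracting the uniform factor $(1-2\gamma)^{2\eta}$, and then enlarging the sum back to all paths so as to use orthogonality in reverse yields
\begin{equation*}
\mathbb{E}_{\bm\theta}\bigl|\langle O_{\text{noisy}}\rangle-\langle O_{\text{noisy}}\rangle^{(\eta)}\bigr|^{2}\le(1-2\gamma)^{2\eta}\,\max_{\bm\theta}\bigl|\langle O(\bm\theta)\rangle\bigr|^{2}\le(1-2\gamma)^{2\eta}\|O\|_{2}^{2}.
\end{equation*}
With $\|O\|_{2}$ bounded by a constant, solving $(1-2\gamma)^{2\eta}\|O\|_{2}^{2}\le\varepsilon$ gives the stated cutoff $\eta=\mathcal{O}(\ln\varepsilon/\ln(1-2\gamma))$.

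For the runtime I would enumerate non-zero paths with $\#\mathrm{trig}(s)\le\eta$ by Pauli back-propagation: starting from $s_{L}=O$ and reading Eq.~\eqref{eq:evolution} in reverse, each gate either leaves the current Pauli invariant (no split, no trigonometric factor) or produces a coherent superposition of two Pauli strings weighted by $\cos\theta_{i}$ and $\sin\theta_{i}$ (one split, one trigonometric factor). Because the total number of splits is bounded by $\eta$, at most $\mathcal{O}(2^{\eta})$ paths survive, and evaluating each weight $f(s,\bm\theta)\,\Tr{s_{0}\rho}$ costs $\mathrm{poly}(n,L)$ (Clifford conjugation and inner products on $n$ qubits), yielding the claimed $\mathcal{O}(2^{\eta})\cdot\mathrm{poly}(n,L)$ complexity. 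Extension to circuits with additional Pauli noise elsewhere is then immediate via Eq.~\eqref{eq:mean_error220000}, since any further Pauli channel only further attenuates each coefficient $f(s,\bm\theta)$.

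The main obstacle I anticipate is justifying the orthogonality step at the level of generality needed here. Although each $f(s,\bm\theta)$ is a product of sines, cosines, and constants, one must verify that distinct paths $s\neq s'$ always yield distinct trigonometric monomials in $\bm\theta$, so that their cross expectations vanish and the damping factor can be pulled outside the sum cleanly. This relies on the fact that each parameter $\theta_{i}$ appears in exactly one rotation gate, so the Heisenberg bookkeeping of Eq.~\eqref{eq:heisenberg_evolved_observable_noise} places the trigonometric exponent along each coordinate in one-to-one correspondence with a branching decision of the path. Without this injectivity the per-path decomposition $(1-2\gamma)^{2\#\mathrm{trig}(s)}$ would couple to surviving cross terms and the truncation error would fail to decay exponentially in $\eta$.
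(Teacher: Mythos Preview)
Your proposal is correct and mirrors the paper's own argument essentially step for step: the Pauli-path residual from Eq.~\eqref{eq:noisy_expectation_value_final}, the orthogonality $\mathbb{E}_{\bm\theta}[f(s,\bm\theta)f(s',\bm\theta)]=0$ for $s\neq s'$ to pass to a sum of squares, extraction of $(1-2\gamma)^{2\eta}$ and re-enlargement to the full sum to bound by $\|O\|_2^2$, the branching enumeration giving $\mathcal{O}(2^\eta)$ paths with $\mathrm{poly}(n,L)$ cost each, and the extension via Eq.~\eqref{eq:mean_error220000}. The orthogonality concern you flag is handled in the paper exactly as you suggest---each $\theta_i$ appears in a single gate, so distinct paths yield distinct trigonometric monomials---and the paper does not elaborate beyond that.
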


\section{Additional Numerical Results Related to Section~\ref{sec:QNN_Robustness}}\label{appendix:application1_numerical}

In Section~\ref{sec:QNN_Robustness}, we have shown the Fourier spectrum of the QNN function $f_{\bm{\theta}}(x)$ across training iterations under different noise models.
To provide a more comprehensive view of the evolution of Fourier amplitudes across different frequencies in Fig.~\ref{fig:pic_QNN_Robustness22_a} and Fig.~\ref{fig:pic_QNN_Robustness22_b}, we present in Fig.~\ref{fig:whole_picture_MPO_QEMappendix} the results for all frequencies from 1~Hz to 21~Hz, illustrating the amplitude dynamics under various types of noise.

Furthermore, in Section~\ref{sec:QNN_Robustness}, we only consider the noise introduced in the rotation gate, which encodes the input $x$. However, in practice, noise is introduced in every gate. Therefore, for every gate except the rotation gate we add noise before, we add depolarizing noise with rate $0.001$ for the single qubit gate and depolarizing noise with rate $0.01$ for double qubit gate, and present the Fourier spectrum of the QNN function $f_{\bm{\theta}}(x)$ across training iterations in Fig.~\ref{fig:whole_picture_MPO_QEMappendix2}.

\begin{figure*}[htb!]
 \centering
 \includegraphics[width = \columnwidth]{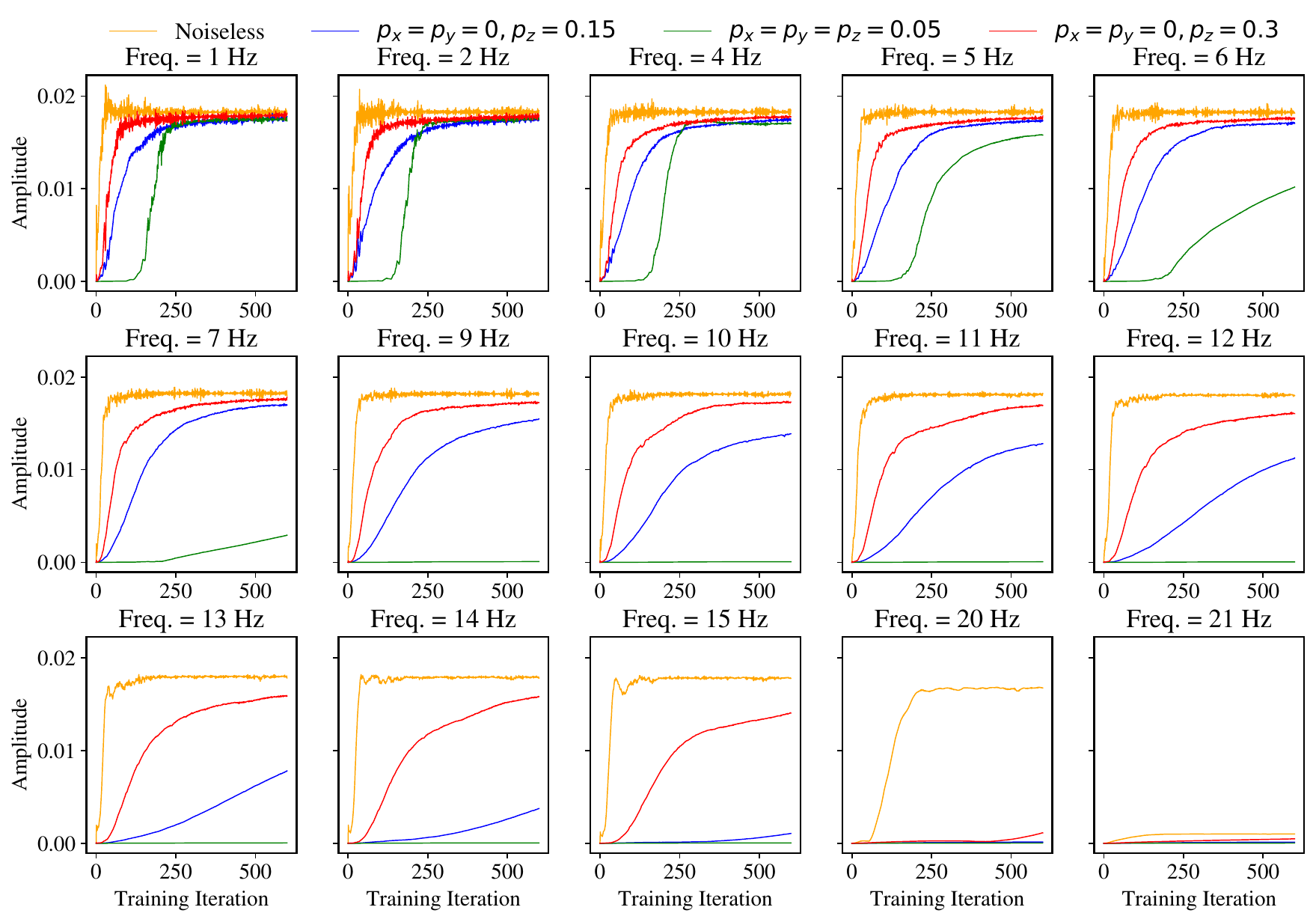}
 \caption{\justifying
 Evolution of Fourier amplitudes at each frequency (from 1Hz to 21 Hz) during QNN training under various noise models, revealing how noise strength and channel type jointly govern the convergence rate and steady-state amplitude across training iterations. }
 \label{fig:whole_picture_MPO_QEMappendix}
\end{figure*}

\begin{figure*}[htb!]
 \centering
 \includegraphics[width = \columnwidth]{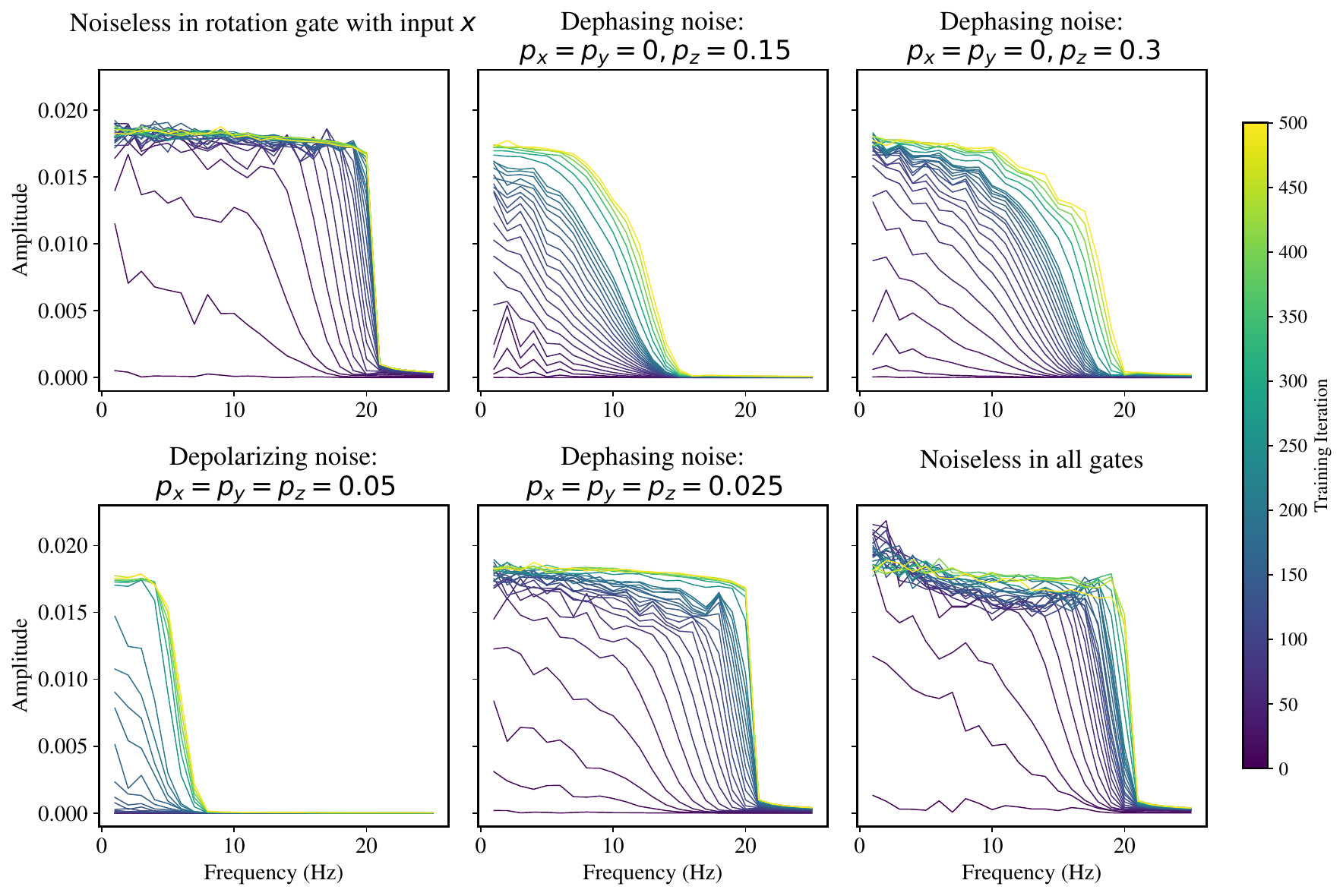}
 \caption{\justifying
    Fourier spectrum of the QNN function $f_{\bm{\theta}}(x)$ across training iterations under different noise models of the encoding rotation gate of input $x$ (described in the title of each subplot). And for other gates except the encoding rotation gate, we add depolarizing noise with rate $0.001$ for the single qubit gate and depolarizing noise with rate $0.01$ for the double qubit gate for all the subfigures except the last one. For the last subfigure, each gate in the QNN is noiseless.
        }
 \label{fig:whole_picture_MPO_QEMappendix2}
\end{figure*}

\end{document}